\newtheorem{theorem}{Theorem}[section]
\newtheorem{lemma}[theorem]{Lemma}
\newtheorem{corollary}[theorem]{Corollary}
\title{Modeling Tradeoffs between mobility, cost, and performance
in Edge Computing}
\author{
 Muhammad Danish Waseem and Ahmed Ali-Eldin \\
  Chalmers university of Technology\\
  \texttt{\{danishm, ahmed.hassan\}@chalmers.se} \\
}
\begin{document}
\maketitle
\begin{abstract}
  Edge computing provides a cloud-like architecture where small-scale resources are distributed near the network edge, enabling applications on resource-constrained devices to offload latency-critical computations to these resources. While some recent work showed that the resource constraints of the edge could result in higher end-to-end latency under medium to high utilization due to higher queuing delays, to the best of our knowledge, there has not been any work on modeling the trade-offs of deploying on edge versus cloud infrastructures in the presence of mobility. Understanding the costs and trade-offs of this architecture is important for network designers, as the architecture is now adopted to be part of 5G and beyond networks in the form of the Multi-access Edge Computing (MEC).
In this paper we focus on quantifying and estimating the cost of edge computing. Using closed-form queuing models, we explore the cost-performance trade-offs in the presence of different systems dynamics. We model how workload mobility and workload variations influence these trade-offs, and validate our results with realistic experiments and simulations.  Finally, we discuss the practical implications for designing edge systems and developing algorithms for efficient resource and workload management.
\end{abstract}


\section{Introduction}

Edge computing offers an alternative to traditional cloud offloading by deploying compute and storage resources at the network edge providing superior network latencies compared to cloud systems for offloading. The promise is that greater proximity to end-devices should address the latency-sensitive needs of real-time and safety critical applications.
One popular edge model, involves deploying small server clusters or micro-data centers at geographically \emph{distributed locations} providing a cloud-like service to applications. Such a distributed edge cloud model can provide many advantages over traditional cloud computing. 
First, with multiple geo-distributed sites,  users at different locations choose the nearest edge sites, enabling the load to be geographically distributed across the edge sites while providing low-access latency. Second, as mobile users move locations, their devices can choose a new edge location that is more proximal than the previous one, enabling edge clouds to handle mobile workload dynamics.  Finally, localized computations on the edge can provide increased privacy~\cite{garcia2015edge}.

However, these advantages of distributed edge clouds come with a number of trade-offs; Since edge clouds are comprised of small clusters that are dispersed across a large number of geographic locations, they have lower economies of scale compared to traditional clouds that co-locate these servers at very large data centers requiring, e.g., increased construction, installation, maintenance, and management costs. 
Second, since each edge site comprises of a small cluster, the potential for overloads at a particular site, and geographic imbalances across various edge sites is non-trivial due to the skewed nature of geo-distributed workloads. This results in poor application performance or higher capital costs to provision additional hardware resources. In contrast, the large pool of servers in traditional cloud data centers offers better multiplexing of resources across skewed workloads.  Specifically, statistical multiplexing benefits in traditional clouds arise from workloads with different spatial (different areas, cities, countries, etc), and temporal (when workload bursts occur in one workload) characteristics being co-located on the same infrastructure, allowing for resource sharing. In edge clouds, such workload dynamics cause overloads or geographic imbalances. A consequence of these two drawbacks is that distributed edge clouds incur higher costs and are more expensive to build than traditional cloud platforms. It is not apriori evident whether the performance benefits offered by edge clouds  outweigh these higher economic costs. 

One previous study~\cite{ali2021hidden} has explored the edge-cloud "performance inversion problem".  Using both analytical modeling and experimental results, the authors highlight that while edge computing offers lower network latency due to its proximity to end-users, its resource limitations can lead to higher end-to-end latency compared to centralized cloud data centers, especially under higher utilization. The authors have referred to the performance inversion problem as the hidden cost of edge computing. However, this prior work does not account for mobility overheads in edge networks, a crucial factor in dynamic environments where devices frequently move or change networks. Mobility introduces additional challenges, such as handovers and context migrations, which can further degrade performance. These migration overheads can be significant with, e.g.,  up to 14.5x increase in latencies for streaming services during handovers as reported for live 5G network deployments~\cite{hassan2022vivisecting}. 

Another missing aspect in the cloud versus edge discussion is a characterization of the additional capacity required for edge infrastructures to maintain (better) performance relative to a centralized cloud due to both mobility and workload skews. Understanding these capacity requirements is crucial for effectively scaling edge computing resources to handle increased workloads and ensuring that performance benefits are sustained as the demand for edge services grows.

In this paper, we take a step towards modeling the performance and costs of realistic edge cloud deployments with migration. Our focus is on finding closed-form models to better study the real performance and monetary costs of edge computing infrastructures in realistic scenarios. Our aim is to provide researchers, system designers, and network operators with models that can be used to analyze different edge design choices. Our contributions can be summarized as follows:
\begin{itemize}
    \item We develop closed-form analytical models to study and quantify the various cost and performance trade-offs with mobility overheads for edge versus cloud deployments.
    \item We quantify the additional capacity requirements for edge infrastructure needed to match or exceed the performance of centralized clouds, considering factors such as workload distribution and mobility.
    \item We validate the accuracy of our models conducting experiments and simulations.
    \item We discuss the practical implications of our results, providing insights into how our findings can influence the design of edge-cloud systems.
\end{itemize}

\section{Background}
\subsection{Edge Computing Systems}
The idea of using edge resources for their lower latency can be traced to the late nineties, with work on Content-Distribution-Networks (CDNs) and caching at the edge of the network~\cite{karger1997consistent,nygren2010akamai,dilley2002globally}. However, for offloading computations, edge computing can be viewed as an extension to the cloudlets concept~\cite{satyanarayanan2009case}. In the cloudlet vision, a mobile user instantiates service software using virtualization technologies on a wireless LAN accessible \emph{cloudlet} nearby. 
A cloudlet is a resource rich computing cluster at the network edge. 
The idea of edge computing further evolved with significant interest from both academia and industry. A key milestone was the adoption of the \emph{Multi-access edge computing (MEC) standard} by ETSI and the 3GPP as a standard architecture and a building block of 5G~\cite{etsi5g,3gpp}. 
This has fueled research in the area~\cite{harchol2020making}, with some focus on video streaming and analytics applications~\cite{ni2023cellfusion}, visual SLAM~\cite{xu2022swarmmap}, and autonomous vehicles~\cite{baidya2020vehicular}. 
In addition, recently, more and more cloud providers have started offering different versions of the edge computing idea~\cite{awsedge,gcpedge,azureedge} 

While Cloud (and Telco) operators have rushed to sell edge computing resources, the promise of edge clouds is far from being realized. 
We believe that this is partially due to the lack of accurate models that capture the cost-performance-latency trade-offs of different edge design choices. Such models will have implications on both how both the system  and edge applications should be designed. 
While there are different implementations of the edge cloud architecture varying from using 5G to Wifi-6~\cite{zhang2024measurements} and LoRa~\cite{rubambiza2023comosum},  in essence, these systems can be modeled using queuing theory~\cite{ali2021hidden}. 
It has been shown both using modeling and experiments that, while the edge does provide superior performance to the cloud at lower utilization, once utilization increases, queuing delays increases at the edge quickly starts dominating the overall service latency of applications~\cite{ali2021hidden}. At medium to high utilizations, the edge can have orders of magnitude higher latency compared to sending the requests to the cloud, leading to "Performance Inversion".

\subsection{Edge Challenges} 
While performance inversion is a challenge, careful and intelligent scheduling can improve the performance of edge deployments such that performance inversion is minimized or eliminated~\cite{wang2024invar}. However, we believe that the limitations of edge cloud architectures are more than just performance inversion, and that rushing to deploy these systems without careful analysis of their performance is premature. In this paper, we focus on three main challenges we have identified with edge computing systems.

First, one of the main selling points to use cloud computing  is the ease of multiplexing workloads with varying temporal 
and spatial dynamics~\cite{khan2012workload,lu2017imbalance}. Temporal workload dynamics are usually a result of an underlying pattern of usage, e.g., day-and-night effects seen in many workloads where there is an increased load during certain hours of the day~\cite{joosen2023does,benevenuto2009characterizing}. Some of these workload dynamics can occur as short term burstiness and spikes in the load due to sudden flash crowds~\cite{atikoglu2012workload,wendell2011going}. 
While spatial dynamics can occur due to the differences in, e.g., time-zones; for edge clouds, we are more interested in spatial dynamics that arise within the same geographical region due to user mobility patterns~\cite{hassan2022vivisecting}. 

The second challenge we have identified arises from this mobility. In the case of 5G networks, when a user moves between two base-stations, a handover to move the data and conncetion of the user is required. Similarly, in the case of edge networks, when a user moves between two edge, the user computations (and state if available) need to migrate between the two edges. Recently, it has been shown that today's 5G networks fail to manage these handovers without significant loss in the Quality-of-Service for video-streaming and online-gaming~\cite{hassan2022vivisecting}. If each of these base-stations hosts also an edge site, a design that has been pushed by most major telcom companies, these handovers would be further complicated by adding computations on edge resources as these computations would also be needed to migrate, not just the communication data which is usually stateless or has very little state. 

The final challenge we aim to address is to quantify how much does edge systems cost compared to traditional cloud systems. Edge cloud systems will be more expensive for operators compared to running on traditional large-scale clouds due to the increased costs of maintenance, installations, and security. Unlike a datacenter with controlled access and physical security, the distribution of server resources does not provide similar security. More importantly, the results from Ali-Eldin et al~\cite{ali2021hidden} show that these servers need to run at a lower utilization compared to the utilization at the cloud. This extra cost due to running at lower utilization has never been studied in details.

\subsection{The Real Cost of the Edge}

Our aim is to build models to study the three challenges stated above focusing on the tradeoffs of edge versus cloud computing. 
We start with the simple queuing model of edge versus centralized clouds used by Ali-Eldin et al.~\cite{ali2021hidden} shown in
Figure \ref{fig:qs}. In this model, the aggregate workload, $\lambda$, arriving at a centralized cloud data center gets equally partitioned across $k$ edge cloud locations, i.e., they assume no spatial or temporal skews, and no migrations.
For edge resources, we model all the servers
resident at an edge cloud location as a single logical server. Since the centralized cloud aggregates all of these servers into a single data center,  the centralized cloud is modeled as $k$ logical servers, which represents the collection of servers from all $k$ edge cloud sites. Previous queuing models show that it is usually beneficial to combine multiple queues to reduce waiting times,  with  a few known exceptions, e.g., when jockeying between the queues is permitted~\cite{rothkopf1987perspectives}. 

For example, it is a known result that, at high system utilization close to 100\%, an M/M/k system with a single queue yields waiting times that are factor of $k$ smaller than using $k$ M/M/1 servers each with a single queue~\cite{harchol2013performance}.
Thus, we expect the queuing delays to be lower in the centralized cloud than in the distributed edge cloud for the same workload. However, each edge cloud data center also has lower network latency to the end user compared to a remote centralized cloud data center. Assuming that the service times are identical in both the edge cloud servers and centralized cloud servers (e.g.,  both use the same server hardware), the benefits of a distributed edge cloud are realized only when the higher queuing delays due to distributed queues are more than offset by the lower network delays offered by each edge site. 

\noindent\textbf{The Cost of Migrations.} In edge computing scenarios, when a user moves from one edge location to another, there is an additional mobility overhead that affects the response time $T_{edge}$. For example, consider an autonomous vehicle equipped with computer vision systems that rely on edge computing for some of its real-time processing and for Vehicle-to-Infrastructure~\cite{liu2021dart}. As the vehicle moves from one city block to another, it must transfer its current processing tasks and data-such as the analysis of surrounding traffic conditions and the ongoing route planning—from the edge server in the previous block to the edge server in the current block. These handovers can introduce delays due to the time needed to reprocess or transfer this information, thereby increasing the response time. 
This leads to the \textbf{first research question (RQ1)} addressed in this paper: \emph{What is the increased cost of data migration between edge sites?}
\begin{figure*}
   \begin{minipage}{0.49\textwidth}
   \begin{subfigure}{0.4\columnwidth}
        \centering
        \includegraphics[width=\linewidth]{ 
     ./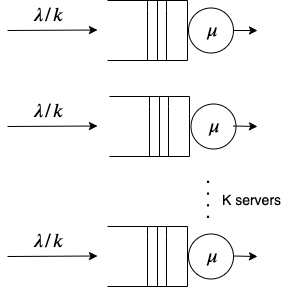}
        \caption{Edge queuing model.}
        \label{fig:distQ}
    \end{subfigure}
     \begin{subfigure}{0.5\columnwidth}
        \centering
        \includegraphics[width=\linewidth]{ 
     ./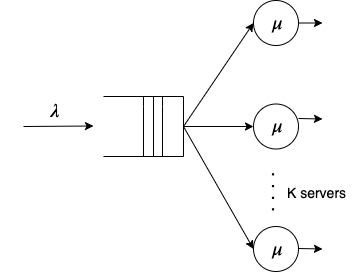}
        \caption{Centralized large-scale cloud queuing model.}
        \label{fig:centQ}
    \end{subfigure}    
    \caption{Queuing models for distributed edge clouds versus centralized remote large-scale backend clouds.}
    \label{fig:qs}
\end{minipage}
   \begin{minipage}{0.49\textwidth}
        \centering
        \includegraphics[width=\linewidth]{./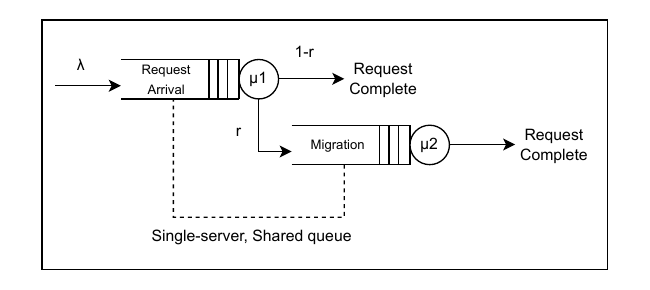}
  \caption{Two-phased single-server queue. Requests arrive and receive the first service (\(\mu_1\)). Then, with probability \( r \), they enter the migration phase (\(\mu_2\)); otherwise, with probability \( 1 - r \), they exit the queue.}
    \label{fig:2phase}
        \end{minipage}
    \end{figure*}

\noindent\textbf{Impact of Workload Skews.}   
The simplified model assumes that the workload arriving at a centralized cloud is equally partitioned across $k$ edge cloud sites and that the load is balanced across all locations. However, as noted earlier, cloud workloads exhibit spatial and temporal dynamics that lead to workload skews and utilization differences across edge locations due to the mobility of users~\cite{gonzalez2008understanding}. 
Skewed workload causes edge sites to experience differing utilization levels. Sites with greater incoming load will experience higher queuing delays 
due to higher utilization. In contrast, in centralized clouds, all requests arrive into a single data center that houses all servers, enabling the system to multiplex bursty workloads onto a larger number of servers, and hence better absorb the impact of spatial and temporal workload dynamics.
This leads to our \textbf{second research question (RQ2)}: \emph{how do spatial and temporal workload skews impact the queuing delays seen within a distributed edge cloud and how do these compare to that in centralized clouds? }

\noindent\textbf{The Cost of Resource Multiplexing and Co-location.} 
Our final observation is based on the observations made by Ali-Eldin et al.~\cite{ali2021hidden} using the well-known \textit{empirical rule} (also known as the \textit{68-95-99.7 rule}) from statistics. Consider a distributed edge cloud with $k$ sites and assume that the request arrival rate at each site follows an independent Poisson process with a mean arrival rate of $\lambda$. 
The Poisson distribution converges to a normal distribution as $\lambda$ increases. Consequently,  95\% of the values fall within two standard deviations of the mean, i.e., the 95th percentile of a normal distribution is approximately $\mu + 2\sigma$, where $\mu$ is the mean and $\sigma$ is the standard deviation.
For a Poisson distribution, the standard deviation is the square root of the mean, i.e., $\sigma = \sqrt{\lambda}$. Consequently, for large values of $\lambda$, the 95th percentile of the arrival rate can be approximated as: $ C_{edge}=k(\lambda+2\sqrt{\lambda}).$

In case of the centralized cloud, all requests arrive into a single data center and the total workload is the summation of the workload seen at each edge site.  Since the arrival rates are independent Poisson random variables, and since the summation of $i$ independent Poisson random variables is a Poisson random variable, with a mean arrival rate $\sum_i \lambda_i$, it follows that the workload seen by the centralized cloud is a Poisson
process with a mean request rate $k\lambda$ and a standard deviation of $\sqrt{k\lambda}$.
 Applying the two sigma rule, the 95-th percentile of the arrival rate  
 is $k\lambda+2\sqrt{k\lambda}$, for which the provisioned capacity $C_{cloud}$ to handle this peak workload should be
$C_{cloud}=k\lambda+2\sqrt{k\lambda}.$

Clearly, $C_{cloud} < C_{edge}$ since $\sqrt{k\lambda} < k\sqrt{\lambda}$. 
This analysis  shows the {\em smoothing benefits} in a centralized cloud since combining workloads from multiple edge cloud sites yields a smoothed workload with a lower peak than the sum of the individual peaks. Consequently, the cost of deploying a distributed edge cloud is higher than that of the centralized cloud (without even including the additional benefits of economies of scale incurred in building larger data centers). 

However, the above analysis is based on the M/M/k model and does not include the fact that the edge resources might be shared between multiple applications. Understanding how such resource sharing can affect the costs is important as it can lead to a completely different design of the network.
 These observations lead us  our \textbf{third and final research question (RQ3)}: \emph{how much extra capacity needs to be provisioned in a distributed edge cloud to service the same workload seen by a centralized cloud, and how does these capacity requirements change in the presence of skewed demand?}

\section{Edge Mobility Overheads}
\label{sec:latency-model}
In this section, we aim to answer RQ1. We start by considering a distributed edge cloud with $k$ geographic locations. Let $\lambda$ denote the mean request rate seen by each edge cloud location. For now, we assume that the workload is balanced across all edge cloud data centers. 
A centralized cloud servicing the same workload as the distributed edge cloud sees a request rate of $k\lambda$.  Let $t_{edge}$ and $t_{cloud}$ denote the network round trip time (RTT) to the edge and the cloud data center, respectively. Since edge data centers are more proximal to end-users,  $t_{edge} < t_{cloud}$.  Let $w_{edge}$ and $w_{cloud}$ denote the queuing delay (i.e., waiting time) incurred by a request upon arrival at the edge and the cloud data center, respectively. Finally, let $s_{edge}$ and $s_{cloud}$ denote the service time of a request at the edge and cloud data center, respectively. The response time of a request is then the summation of the network RTT, queuing delay, and service time. Therefore, for edge to have better response time than cloud:
\begin{equation}
  t_{edge}+w_{edge}+s_{edge} <  t_{cloud}+w_{cloud}+s_{cloud}
  \label{eq:lessthan}
\end{equation}





\subsection{Mobility Overhead} \label{sec:time-skew}

As previously stated, it is expected that a significant number of edge applications will involve user mobility. 
Similar to mobile phone handovers, there will be handover and migration delays due to downtime during the handover \cite{ha2017you, chaufournier2017fast}.
Handovers can involve varying levels of complexity. For instance, in a stateless migration, such as relocating a web service from one edge node to another without retaining any user-specific data or session information, the overhead is minimal. 
In contrast, a stateful migration, such as handling real-time video streaming data, can introduce significant overhead.
During a handover, resources such as computational power, network bandwidth, and memory are consumed, which were previously available for handling incoming requests. This migration overhead can lead to increased response times and reduced throughput not only for the migrating user's requests but also for other requests in the system. 
 


To model the mobility overhead on a single edge site, we model each edge site as two-phased  single-server queues as shown in Figure~\ref{fig:2phase}. In this model, request serviced can occur with either no migrations, passing through the first phase only of the two-phased queue, or with migrations, having to rely on the migration logic modeled as the second phase in the queuing model which captures, e.g., the migration algorithm, the network transfer bandwidth for migrations, and the size of the data to be migrated. This model is general and can capture any overheads incurred at the edge for migration. An arriving request at the two-phased queuing system can thus either be a new request from a \emph{home-user} or a request from a \emph{migrating user}, which adds to the variability in the request arrival rate \emph{$\lambda$} as we discuss later. 

Let $s_{migration}$ be the service time of handling migrations on an edge site, i.e., the second phase on the edge described above. Request migrations are not common in centralized clouds unless there are failures, a case that we do not aim to cover in our models. Equation~\ref{eq:lessthan}  becomes:
\begin{equation}
t_{edge}+w_{edge}+s_{edge}+s_{migration}  <  t_{cloud}+w_{cloud}+s_{cloud}
  \label{eq:deltan}
\end{equation}
Assuming that both the edges and centralized cloud have the same hardware configuration with First-Come-First-Serve service discipline, it follows that the service time for a request
will be identical on both the edge and the cloud server once a request starts being processed. That is, $s_{edge} = s_{cloud}$, with only the mobility overhead as a difference between the two. The inequality \eqref{eq:deltan} 
 then reduces to
\begin{equation}
    t_{edge}+w_{edge}+s_{migration}  < t_{cloud}+w_{cloud}   
\end{equation}
Let $\Delta t$ denote the difference in network round trip times between the edge and cloud data centers. That is 
$\Delta t =  t_{cloud} - t_{edge}$. Substituting $\Delta t$ in the above inequality yields:
\begin{equation}
   \Delta t >  w_{edge} - w_{cloud} + s_{migration}\label{eq:deltan2}
\end{equation}

\subsection{M/M/k Latency bound}  \label{MMk Latency bound}

We first assume that each edge data center is modeled as single  $M/M/1$ server. 
Since there are $k$ edge sites, the centralized cloud is then modeled as a $M/M/k$ system.
\begin{lemma}
Assuming that the edge and cloud data centers are modeled as $M/M/1$ and $M/M/k$ queuing systems, respectively, The edge cloud offers lower response times whenever the network RTT difference between the edge and the cloud is greater than $(\frac{\lambda(\mu_2^2 +r\mu_1^2+r\mu_1\mu_2)}{\mu_1\mu_2(\mu_
1\mu_2-\lambda\mu_2-r\lambda\mu_1)} + \frac{r\lambda}{\mu_2 (\mu_2- r\lambda)} + \frac{r}{\mu_{2}} - \frac{1}{\mu_{cloud}(1-\rho_{cloud})\sqrt{k}}) $
\end{lemma}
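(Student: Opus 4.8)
The plan is to turn the abstract inequality \eqref{eq:deltan2} into the explicit threshold by deriving closed forms for each of its three delays — the edge waiting time $w_{edge}$, the migration overhead $s_{migration}$, and the cloud waiting time $w_{cloud}$ — and then substituting. Reading off the structure of the claimed bound, I expect the first summand to be $w_{edge}$, the middle two summands together to be $s_{migration}$, and the final (subtracted) summand to be $w_{cloud}$. So the proof reduces to establishing these three identifications and collecting terms in \eqref{eq:deltan2}.

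First I would compute $w_{edge}$. The two-phased edge queue of Figure~\ref{fig:2phase} is an $M/G/1$ system whose service requirement is a mixture: with probability $1-r$ it is a single $\mathrm{Exp}(\mu_1)$ phase, and with probability $r$ it is the sum of independent $\mathrm{Exp}(\mu_1)$ and $\mathrm{Exp}(\mu_2)$ phases. I would take the first two moments of this service time, $E[S]=\tfrac{1}{\mu_1}+\tfrac{r}{\mu_2}$ and $E[S^2]=\tfrac{2}{\mu_1^2}+\tfrac{2r}{\mu_1\mu_2}+\tfrac{2r}{\mu_2^2}$, and feed them into the Pollaczek--Khinchine mean-wait formula $w_{edge}=\tfrac{\lambda E[S^2]}{2(1-\rho)}$ with $\rho=\lambda E[S]$. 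Placing numerator and denominator over the common factor $\mu_1^2\mu_2^2$ should collapse this exactly to the first term of the stated bound.

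Next I would model the migration overhead. Since only a fraction $r$ of arrivals trigger a handover, the migration phase can be treated as its own $M/M/1$ queue fed at rate $r\lambda$ and served at rate $\mu_2$, giving mean queueing delay $\tfrac{r\lambda}{\mu_2(\mu_2-r\lambda)}$ (the second summand), while the expected migration service contribution averaged over all arrivals is $r\cdot\tfrac{1}{\mu_2}=\tfrac{r}{\mu_2}$ (the third summand); together these are $s_{migration}$. I should flag that this couples migration into the bound twice — once through the inflated second moment inside $w_{edge}$ and once through this sub-queue — so I would state the modeling assumption that makes the two contributions additive rather than double-counted. Finally, treating the centralized cloud as an $M/M/k$ system with per-server utilization $\rho_{cloud}$, I would invoke the heavy-traffic approximation of its mean wait to obtain $w_{cloud}=\tfrac{1}{\mu_{cloud}(1-\rho_{cloud})\sqrt{k}}$. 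Substituting $w_{edge}$, $s_{migration}$, and $-w_{cloud}$ into \eqref{eq:deltan2} and regrouping then yields the claimed threshold on $\Delta t$.

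The algebra of the Pollaczek--Khinchine simplification is routine; the genuine obstacle is the $M/M/k$ term. The exact cloud wait is the Erlang-C expression, which has no elementary $1/\sqrt{k}$ form, so the $\sqrt{k}$ scaling must be justified through a specific asymptotic regime — the same square-root smoothing that drives the capacity argument for $C_{cloud}$ versus $C_{edge}$ — rather than the factor-of-$k$ heavy-traffic statement quoted earlier. I would therefore have to state precisely which approximation of the $M/M/k$ delay is used and under what utilization regime it is tight, because the direction of the final inequality depends on whether this approximation under- or over-estimates $w_{cloud}$.
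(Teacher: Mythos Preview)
Your overall architecture matches the paper's: derive $w_{edge}$ from the Pollaczek--Khinchine formula applied to the two-phase service mixture, add a migration contribution, and subtract an $M/M/k$ heavy-traffic approximation for $w_{cloud}$. The first-term algebra and the $r/\mu_2$ service contribution are exactly as in the paper, and your caution about the $\sqrt{k}$ term is well placed --- the paper resolves it by citing Whitt's QED (Halfin--Whitt) conditional-wait approximation $\mathbb{E}[w_{cloud}\mid w_{cloud}>0]=\tfrac{1}{\mu_{cloud}(1-\rho_{cloud})\sqrt{k}}$ and noting that, being a conditional expectation, it upper-bounds the unconditional mean and therefore keeps the inequality conservative in the right direction.

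Where you diverge is in the \emph{meaning} of the second summand. You model it as the queueing delay of a separate migration sub-queue (arrivals $r\lambda$, server rate $\mu_2$), and you correctly worry that this double-counts the phase-two service already baked into the PK wait. The paper instead interprets that term as the waiting time incurred at the \emph{destination} edge site after a handover: the migrated request lands in another $M/M/1$ queue, fed at rate $r\lambda$ and served at the ordinary edge rate $\mu_1$ (with the ``no second migration'' assumption $\mu_2\to\infty$ there), giving $\tfrac{r\lambda}{\mu_1(\mu_1-r\lambda)}$. That reading dissolves your double-counting concern --- it is a genuinely different queue, not the migration phase again --- and it is what the paper's proof actually substitutes into \eqref{eq:deltan2}. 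Note this means the paper's own lemma statement (with $\mu_2$ in the second denominator) disagrees with its proof (which uses $\mu_1$); your reconstruction happens to match the stated bound, while the paper's derivation matches the one with $\mu_1$. Under the paper's decomposition, $w_{edge}$ is the sum of the first two terms (source wait plus destination wait) and $s_{migration}=r/\mu_2$ alone, rather than your grouping of terms two and three together.
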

\textbf{Proof: }
To incorporate mobility into the edge cloud model, we use the analysis of a two-phase queueing system with an optional second service phase, as described in \cite{al2003m}, which is an extension of the well-known Pollaczek–Khinchine formula \cite{bolch2006queueing}. Client sends a service request to the closest edge data center, where it undergoes the first phase of service. After this initial service is complete, with probability $r$, the client may migrate from the current edge site to another, requiring a second phase of service. Alternatively, with probability $1-r$, the client may exit the system, remaining in the same edge location during the entire processing (Figure \ref{fig:2phase}). 

Waiting time for exponential service and migration time distributions (i.e., \( M/M/1 \) queue) is given by \cite{al2003m}:
\begin{equation}
   \mathbb{E}[w]_{(M/M/1)} =   \frac{\lambda \left( \frac{2}{\mu_1^2} + \frac{2r}{\mu_2^2} + \frac{2r}{\mu_1 \mu_2} \right)}{ 2\left( 1 - \frac{\lambda}{\mu_1} - \frac{r \lambda}{\mu_2} \right)} \quad \Rightarrow \quad 
\frac{\lambda(\mu_2^2 +r\mu_1^2+r\mu_1\mu_2)}{\mu_1\mu_2(\mu_
1\mu_2-\lambda\mu_2-r\lambda\mu_1)}\label{eq:deltan5}
\end{equation}

For exponential service and migration times at the edge, the service time for requests that do not migrate is, \( s_{\text{edge}} = \frac{1}{\mu_{1}} \), where \( \mu_{1} \) is the service rate for non-migrating requests.  
If there is no migration, i.e.\ $r = 0$ or migration does not take any time i.e. $\mu_2 \rightarrow \infty$, then the non-simplified form of Equation~\eqref{eq:deltan5} becomes
\( \frac{\lambda}{\mu_1(\mu_1 - \lambda)} \label{eq:deltan10}
\) 
which is a known result of the M/M/1 queue \cite{bolch2006queueing}.

Additionally, the utilization of the destination edge server also significantly impacts the waiting time for those requests that are migrated. Since the destination site is also an $M/M/1$ queue, its expected waiting time is also given by equation\eqref{eq:deltan5}. To calculate the actual service time for requests that are migrated, we need to take into account the destination site queuing time. However, to ensure that only the contribution of migrated requests is accounted for in the waiting time, and assuming that a request migrates only once before being completely serviced, and since the request arrival and service rates across all edge sites are equal, we set $\mu_2 = 0$ for the destination queue, i.e., the request migrated will not be migrated again. Then,
\[
\mathbb{E}[w_{dest}]_{(M/M/1)} = \frac{r\lambda}{\mu_1 (\mu_1- r\lambda)}
\]

Thus, the total expected waiting time at the edge is given by:

\begin{equation}
   \mathbb{E}[w_{edge}]_{(M/M/1)} =  \frac{\lambda(\mu_2^2 +r\mu_1^2+r\mu_1\mu_2)}{\mu_1\mu_2(\mu_
1\mu_2-\lambda\mu_2-r\lambda\mu_1)} + \frac{r\lambda}{\mu_1 (\mu_1- r\lambda)}\label{eq:deltan6}
\end{equation}






Since the second phase of the queuing model does not only lead to waiting delays, but there are also processing delays, e.g., for the migration algorithms, this also needs to be factored in our model. Given that only a proportion \( r \) of requests continue to phase two and require to migrate to different edges, the expected service time for phase two (the migration), averaged across all requests (including those that skip phase two), is \( s_{\text{migration}} = r \times \frac{1}{\mu_{2}} \).

Finally, to analyze queuing delays at the centralized cloud in Equation~\ref{eq:deltan2} having $k$ servers, we use the model developed by  Whitt et al. ~\cite{whitt1992understanding} in the Quality-and-Efficiency-Driven (QED) regime \cite{halfin1981heavy}. The model characterizes the conditional expected waiting times in large-scale multiserver queues operating under heavy traffic as,
\begin{equation}
    \mathbb{E}[w_{cloud}|w_{cloud}>0]_{(M/M/k)}  =\frac{1}{\mu_{cloud}(1-\rho_{cloud})\sqrt{k}} \label{eq:ExpWM}
\end{equation}
Since \eqref{eq:ExpWM} represents the expected waiting time of requests conditioned on the waiting time being greater than zero, it follows that $
\mathbb{E}[w_{\text{cloud}} \mid w_{\text{cloud}} > 0] \geq \mathbb{E}[w_{\text{cloud}}]$,
because the conditional expectation excludes requests with zero wait time, thereby providing a conservative bound for $\Delta t$. 
Putting $w_{edge}$, $s_{\text{migration}}$ and $w_{cloud}$ in \ref{eq:deltan2}, we get,

\begin{equation}
   \Delta t >   \frac{\lambda(\mu_2^2 +r\mu_1^2+r\mu_1\mu_2)}{\mu_1\mu_2(\mu_
1\mu_2-\lambda\mu_2-r\lambda\mu_1)} + \frac{r\lambda}{\mu_1 (\mu_1- r\lambda)}  + \frac{r}{\mu_{2}} - \frac{1}{\mu_{cloud}(1-\rho_{cloud})\sqrt{k}},\label{eq:deltan6}
\end{equation}
which completes the proof.



\begin{corollary} \label{corollary43}
As the number of edge locations $k$ increases,  $\Delta n$ becomes a function of only the edge cloud utilization. 
\end{corollary}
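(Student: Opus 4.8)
The plan is to treat the lower bound on $\Delta t$ established in the Lemma as a function of $k$ and to examine its limiting behavior, observing that $k$ enters the expression through exactly one term. Reading off the right-hand side of the final displayed inequality, the first summand, the second summand, and the term $r/\mu_2$ are all built only from the edge-side parameters $\lambda, \mu_1, \mu_2, r$, while $k$ appears solely in the cloud waiting-time term $\frac{1}{\mu_{cloud}(1-\rho_{cloud})\sqrt{k}}$. Since $\rho_{cloud} = \lambda/\mu_{cloud}$ is the per-server utilization of the $M/M/k$ cloud and is itself independent of $k$ when the per-site rate $\lambda$ is held fixed, this last term has the form $\text{const}/\sqrt{k}$.

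First I would take the limit $k \to \infty$ holding $\lambda$ (and hence $\rho_{cloud}$ and $\rho_{edge}$) fixed. The cloud term then satisfies $\frac{1}{\mu_{cloud}(1-\rho_{cloud})\sqrt{k}} \to 0$, so the threshold collapses to the sum of the three edge-side terms alone. This matches the intuition behind~\eqref{eq:ExpWM}: a sufficiently large aggregated $M/M/k$ pool drives the conditional cloud waiting time to zero, so the centralized cloud's queuing contribution to the RTT crossover becomes negligible once enough sites are aggregated.

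Next I would show that the surviving edge-side expression is governed by a single quantity, the edge utilization. Factoring the denominator of the first term gives $\mu_1\mu_2 - \lambda\mu_2 - r\lambda\mu_1 = \mu_1\mu_2\left(1 - \tfrac{\lambda}{\mu_1} - \tfrac{r\lambda}{\mu_2}\right) = \mu_1\mu_2\,(1-\rho_{edge})$, where $\rho_{edge} = \tfrac{\lambda}{\mu_1} + \tfrac{r\lambda}{\mu_2}$ is precisely the load of the two-phase $M/M/1$ edge queue, i.e.\ the quantity whose positivity is the stability condition behind~\eqref{eq:deltan5}. Under the paper's standing assumption that all sites use identical, fixed hardware, the rates $\mu_1, \mu_2$ and the migration probability $r$ are constants, so $\rho_{edge}$ is a strictly increasing, invertible linear function of $\lambda$. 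Inverting it to write $\lambda = \rho_{edge}/\left(\tfrac{1}{\mu_1} + \tfrac{r}{\mu_2}\right)$ and substituting into each edge term expresses the entire limiting threshold as a function of $\rho_{edge}$ alone, which is the assertion.

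The main obstacle is this last step rather than the limit itself: the residual edge terms depend on $\lambda, \mu_1, \mu_2, r$ individually, so the reduction to a function of only the edge utilization is not purely formal and rests on fixing the hardware parameters together with the $\lambda \leftrightarrow \rho_{edge}$ bijection. I would therefore state explicitly that $\mu_1, \mu_2, r$ are held fixed, consistent with the identical-hardware assumption used to obtain~\eqref{eq:deltan2}; without it the claim fails, since distinct service-rate profiles yielding the same $\rho_{edge}$ would give different thresholds. A secondary point to handle carefully is the coupling between $\lambda$ and $k$: I would keep the per-site arrival rate $\lambda$ fixed as $k$ grows, so that $\rho_{cloud}$ and $\rho_{edge}$ are genuinely constant and the $1/\sqrt{k}$ decay is legitimate.
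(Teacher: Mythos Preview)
Your approach is essentially the same as the paper's: both arguments observe that the only $k$-dependent term is the cloud waiting time $\frac{1}{\mu_{cloud}(1-\rho_{cloud})\sqrt{k}}$, which tends to $0$ as $k\to\infty$, leaving the bound a function of edge-side quantities only. You actually go further than the paper, which stops at ``function of only the edge cloud workload parameters'' and displays $\Delta t > w_{edge} + s_{migration}$; your additional step of fixing $\mu_1,\mu_2,r$ and inverting $\rho_{edge}\leftrightarrow\lambda$ to justify the word ``utilization'' in the statement is a detail the paper's own proof glosses over.
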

\noindent\textit{Proof: }
As $k\rightarrow \infty$, the term $\frac{1}{\mu_{cloud}(1-\rho_{cloud})\sqrt{k}} \rightarrow 0$. Therefore, as the number of edge locations $k$ increases,  $\Delta t$ becomes a function of only the edge cloud workload parameters. 
\begin{equation}
   \Delta t >  w_{edge} + s_{migration}\label{eq:deltan4}
\end{equation}   
\begin{corollary} \label{corollary2}
Utilization of the destination server significantly affects the waiting time.
\end{corollary}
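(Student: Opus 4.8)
The plan is to isolate the destination-server term already obtained in the proof of the lemma and show that, once written in the natural utilization variable, it is a highly volatile contribution to the edge waiting time. Recall that the migrated requests form an $M/M/1$ queue at the destination with arrival rate $r\lambda$ and service rate $\mu_1$, yielding the destination waiting time $\frac{r\lambda}{\mu_1(\mu_1 - r\lambda)}$. Defining the destination utilization $\rho_{dest} = \frac{r\lambda}{\mu_1}$ — the relevant stability parameter, since the destination queue only sees the migrating fraction of the load — I would first rewrite this term as
\[
\mathbb{E}[w_{dest}] = \frac{\rho_{dest}}{\mu_1(1-\rho_{dest})}.
\]
This already exhibits the qualitative behaviour: the term is negligible when $\rho_{dest}$ is small but diverges as $\rho_{dest} \to 1$.

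Next I would make the informal phrase ``significantly affects'' precise by quantifying the marginal effect of utilization. Differentiating with respect to $\rho_{dest}$ gives
\[
\frac{d\,\mathbb{E}[w_{dest}]}{d\rho_{dest}} = \frac{1}{\mu_1(1-\rho_{dest})^2},
\]
whose quadratic blow-up as $\rho_{dest} \to 1$ shows that near saturation a small increase in destination load — whether driven by a larger migration probability $r$ or a larger base rate $\lambda$ — produces a disproportionately large increase in waiting time. Contrasting the two regimes completes the argument: for $\rho_{dest}$ near zero the term is $O(\rho_{dest})$ and nearly irrelevant, whereas for $\rho_{dest}$ near one it diverges and comes to dominate the total edge waiting time of Equation~\eqref{eq:deltan6}.

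The computations here are routine; the real work is conceptual. The main obstacle is assigning a defensible meaning to ``significantly affects,'' and the cleanest resolution is the derivative argument above, which converts the qualitative claim into an unbounded marginal-sensitivity statement at saturation. The only subtlety to guard against is correctly identifying $\rho_{dest} = r\lambda/\mu_1$ (rather than $\lambda/\mu_1$) as the governing utilization, and maintaining the stability assumption $r\lambda < \mu_1$ so that the destination queue, and hence the bound, remains well defined throughout.
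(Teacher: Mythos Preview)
Your proposal is correct and follows essentially the same approach as the paper: rewrite the destination term in the utilization variable and observe that it vanishes as $\rho_{dest}\to 0$ and diverges as $\rho_{dest}\to 1$. Your derivative calculation is an extra flourish the paper does not include, and your parametrization $\rho_{dest}=r\lambda/\mu_1$ yields $\frac{\rho_{dest}}{\mu_1(1-\rho_{dest})}$ whereas the paper writes $\frac{r\rho_{dest}}{\mu_1(1-\rho_{dest})}$ (apparently taking $\rho_{dest}=\lambda/\mu_1$, though not entirely consistently); either reading supports the same limiting argument.
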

\noindent\textit{Proof: }
\[
\mathbb{E}[w_{dest}]_{(M/M/1)} = \frac{r\lambda}{\mu_1 (\mu_1- r\lambda)} \Rightarrow   \frac{r \rho_{dest}}{\mu_1 (1 - \rho_{dest})}
\]

As $\rho_{dest} \rightarrow 1,  \frac{r \rho_{dest}}{\mu_1 (1 - \rho_{dest})} \rightarrow \infty$. 
If the destination edge server lacks the capacity to handle further incoming requests, they should be offloaded to the cloud. Similarly, the best case scenario for edge would be when $\rho_{dest} \rightarrow 0$, then the term $\mathbb{E}[w_{dest}]_{(M/M/1)} \rightarrow 0$  

\begin{corollary} \label{corollary3}
Suppose that $t_{edge}$ is approximately equal to zero. The above inequality becomes the lower bound on the cloud network round-trip time below which the edge cloud will yield worse response time than the centralized cloud.
\end{corollary}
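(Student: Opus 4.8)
The plan is to start from the governing inequality established in the Lemma, namely that the edge cloud delivers a strictly lower response time precisely when
\[
\Delta t > w_{edge} - w_{cloud} + s_{migration},
\]
where $\Delta t = t_{cloud} - t_{edge}$ is the difference in network round-trip times. The only work the corollary requires is to specialize this relation under the stated hypothesis $t_{edge}\approx 0$ and then to re-read the resulting inequality as a constraint on $t_{cloud}$ rather than on $\Delta t$.

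First I would substitute $t_{edge}\approx 0$ into the definition of $\Delta t$, which collapses it to $\Delta t \approx t_{cloud}$. Plugging this into the Lemma's inequality turns the edge-advantage condition into
\[
t_{cloud} > w_{edge} - w_{cloud} + s_{migration},
\]
and after inserting the explicit $M/M/1$ and QED-regime expressions for $w_{edge}$, $s_{migration}$, and $w_{cloud}$ derived in the proof of the Lemma, the right-hand side becomes exactly the threshold appearing in the Lemma statement. This quantity depends only on the workload parameters ($\lambda$, $\mu_1$, $\mu_2$, $r$) and the cloud utilization, and is therefore a fixed number once the system is specified.

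Next I would take the contrapositive to obtain the claimed interpretation. The inequality above is the condition for the edge to win; its negation, $t_{cloud} \le w_{edge} - w_{cloud} + s_{migration}$, is precisely the condition under which the edge fails to achieve a lower response time, i.e.\ the edge is no better — and generically worse — than the centralized cloud. Hence the right-hand side is a lower bound on the cloud RTT: whenever the actual cloud RTT falls below it, offloading to the edge yields a worse response time, which is the statement of the corollary.

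The argument is essentially an algebraic substitution followed by a contrapositive, so there is no deep technical obstacle. The one point that needs care is the direction and strictness of the inequality when passing to \emph{worse response time}: because $w_{cloud}$ here is the conditional expected wait $\mathbb{E}[w_{cloud}\mid w_{cloud}>0]$, which upper-bounds the unconditional mean, the derived threshold is conservative — it slightly overstates the cloud wait and hence understates the region in which the edge loses. I would state this explicitly so that the bound is read as a sufficient (rather than tight) condition for edge underperformance, and note that the approximation $t_{edge}\approx 0$ only removes a small nonnegative term, tightening the bound in the edge's disfavor.
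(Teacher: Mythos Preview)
Your proposal is correct and follows essentially the same approach as the paper: substitute $t_{edge}\approx 0$ so that $\Delta t$ reduces to $t_{cloud}$ in the governing inequality, and then read the result as a threshold on the cloud RTT. The paper's own proof is in fact a one-line substitution, so your additional remarks on the contrapositive and on the conservativeness of the conditional-wait bound are elaborations beyond what the paper provides, but fully consistent with it.
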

\noindent\textit{Proof:} 

if $t_{edge}=0$, $\Delta t$ becomes equal to the cloud latency only in Equation~\ref{eq:deltan2}.

It is evident that when the queuing delays on centralized cloud and latency of edge cloud are negligible, then this implies that if the cloud latency drops below a certain threshold, it has the potential to offer overall response times that are ``good enough'' for edge applications and lower than what a smaller edge data center can provide, since the reduction in network latency at an edge site no longer offsets the higher queuing delays in the edge data center. Queuing delay on edge data center is heavily effected by the frequency of migrations $r$ and migration service rate $\mu_2$.

\noindent\textbf{Implications:}
When deciding when to offload to the cloud, equation \eqref{eq:deltan6} becomes a simple threshold policy: offload when the expected delay at the edge exceeds the cloud delay + latency penalty.
In equation \eqref{eq:deltan5}, the numerator is the service variability. Every extra percentage of tasks that migrate adds variability  as $(r/\mu_2^2 + r/\mu_1\mu_2)$. Slower $\mu_2$ inflates both migration‑related terms, so variability can explode when the second phase is a bottleneck. This could be  problematic for latency-sensitive applications. For instance, for ML inference tasks,  task completion time varies depending on model size. Since these workloads often require large models (and thus higher startup latency), the inflation of variability becomes significant. One mitigation is to reduce $r$ by routing  requests through sticky sessions when the user is within a high-confidence geo-fence. However, again there would be trade-off of latency under client mobility.

The denominator in  equation \eqref{eq:deltan5} is the remaining capacity of the edge which must remain above $0$ for the system to remain stable. Even if the first phase is fast ( $\mu_1$ large), a moderate 
$r$ and/or slow second phase ( $\mu_2$  small) can push utilization $\rho \rightarrow 1$ much sooner. Our tuning levers are to reduce $r$ or increase $\mu_2$  to increase $1-\rho$, lowering delay. 
In edge computing deployments with mobile clients,  capacity planning and  handover management are crucial to ensure low-latency service during user migrations. Our two-phase queueing model yields actionable guidelines for edge designers by characterizing how migration probability $(r)$ and migration service rate $(\mu_2)$ affect response times.


We show that even with the simplest workload assuming Poisson arrivals, migration could significantly affect the performance of the edge cloud. 
\subsection{GI/G/k Latency Bound}
The  $M/M/k$ latency bound relies on idealized assumptions of Poisson arrivals and exponentially distributed service times and it provides valuable analytical insight due to its simplicity. However, the landmark works by Leland et al. \cite{leland2002self} and Paxson and Floyd \cite{paxson1995wide} demonstrated that real-world network traffic exhibits long-range dependence, and heavy-tailed behavior, all of which violate the assumptions underpinning Markovian models. To better reflect realistic conditions, we extend our latency analysis in section \ref{MMk Latency bound} to approximate a $GI/G/k$ system. 
We again assume each edge data center to be modeled as a single $GI/G/1$ queue and the cloud data center to be modeled as a $GI/G/k$ system that services the aggregate workload of the $k$ edge data centers.
As discussed by Whitt et al. in~\cite{whitt1993approximations}, although the $GI/G/k$ model assumes independent interarrival times and relies on a parametric characterization through a renewal process, the aim is not to overlook the dependence between successive arrivals. Rather, the essential impact of this dependence is captured indirectly by incorporating a variability parameter, such as the squared coefficient of variation.
\begin{lemma}
Assuming that the edge and cloud data centers are modeled as $GI/G/1$ and $GI/G/k$ queuing systems, respectively, The edge cloud offers lower response times whenever the network RTT difference between the edge and the cloud is greater than $( [
\frac{\lambda \left( \frac{1}{\mu_1^2} + \frac{r}{\mu_2^2} + \frac{r}{\mu_1 \mu_2} \right)}{1 - \frac{\lambda}{\mu_1} - \frac{r \lambda}{\mu_2}}
+ \frac{r\lambda}{\mu_1 (\mu_1- r\lambda)}] \cdot  
\frac{c_A^2 + c_S^2}{2}
 + \frac{r}{\mu_2} - 
\frac{\rho_{\text{cloud}}^k + \rho_{\text{cloud}}}{2\mu_{\text{cloud}}(1-\rho_{\text{cloud}})} \cdot \frac{c^2_{A_{\text{cloud}}}+c^2}{2k}
$\label{lemmagg1}
(Proof in Appendix \ref{AppGG1}).
\end{lemma}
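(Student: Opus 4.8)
The plan is to reuse the scaffolding of the $M/M/k$ proof and replace each Markovian waiting-time term with its $GI/G/\cdot$ counterpart through the Kingman/Allen--Cunneen diffusion correction, then substitute back into the model-agnostic base inequality. Inequality~\eqref{eq:deltan2}, $\Delta t > w_{edge} - w_{cloud} + s_{migration}$, does not depend on any distributional assumption: it follows purely from decomposing each response time into network RTT, queuing delay, service time, and migration overhead, together with $s_{edge}=s_{cloud}$. Hence the entire task reduces to supplying $GI/G/1$ and $GI/G/k$ expressions for $w_{edge}$ and $w_{cloud}$ and plugging them in.

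For the edge, I would begin from the two-phase $M/M/1$ waiting time of Equation~\eqref{eq:deltan5}, which simplifies (cancelling the $2$ in the Pollaczek--Khinchine denominator against the exponential second moments $2/\mu_i^2$) to the mean-based form $\lambda\bigl(\tfrac{1}{\mu_1^2}+\tfrac{r}{\mu_2^2}+\tfrac{r}{\mu_1\mu_2}\bigr)/\bigl(1-\tfrac{\lambda}{\mu_1}-\tfrac{r\lambda}{\mu_2}\bigr)$, and then invoke the heavy-traffic approximation $\mathbb{E}[W]_{GI/G/1}\approx \mathbb{E}[W]_{M/M/1}\cdot\tfrac{c_A^2+c_S^2}{2}$ \cite{bolch2006queueing}. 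Applying this factor to both the first-phase term and the destination-queue term $\tfrac{r\lambda}{\mu_1(\mu_1-r\lambda)}$ of Equation~\eqref{eq:deltan6} yields exactly the bracketed edge expression; since $c_A^2=c_S^2=1$ recovers the $M/M/1$ value, this is a consistent generalization. The migration contribution $s_{migration}=r/\mu_2$ is an expected processing time rather than a waiting time, so it carries over unchanged.

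For the cloud I would apply the Allen--Cunneen extension $\mathbb{E}[W_q]_{GI/G/k}\approx \mathbb{E}[W_q]_{M/M/k}\cdot\tfrac{c_A^2+c_S^2}{2}$, writing the $M/M/k$ base as $\mathbb{E}[W_q]_{M/M/k}=P_{\text{wait}}/\bigl(k\mu_{cloud}(1-\rho_{cloud})\bigr)$ and substituting the closed-form Erlang-C approximation $P_{\text{wait}}\approx(\rho_{cloud}^k+\rho_{cloud})/2$. This produces $\tfrac{\rho_{cloud}^k+\rho_{cloud}}{2\mu_{cloud}(1-\rho_{cloud})}\cdot\tfrac{c_{A_{cloud}}^2+c^2}{2k}$, with $c_{A_{cloud}}^2$ the squared coefficient of variation of the aggregated arrivals and $c^2=c_{S_{cloud}}^2$. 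Note this deliberately replaces the conditional QED bound used above with an unconditional Allen--Cunneen base, which is the natural form in the $GI/G/k$ framework. Substituting the three pieces into Equation~\eqref{eq:deltan2} gives the stated bound, completing the argument.

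The main obstacle is justifying the variability corrections for the non-standard, migration-coupled process rather than the algebra. The optional second phase makes the composite edge service non-renewal, and migrations induce correlation between sites that breaks the independent-interarrival assumption of a clean $GI/G/1$ renewal model; likewise the cloud stream is a superposition of $k$ such mixed streams. Following Whitt~\cite{whitt1993approximations}, I would argue that these dependencies are absorbed into the effective squared coefficients of variation $c_A^2,c_S^2$ rather than modeled explicitly, so the diffusion approximation remains the correct first-order tool. The delicate points are defining $c_A^2$ and $c_{A_{cloud}}^2$ consistently --- the edge arrival stream mixes home and migrating users, and the cloud stream is their superposition --- and acknowledging that accuracy away from heavy traffic is only approximate, which I would defer to the experimental validation.
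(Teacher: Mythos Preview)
Your proposal is correct and follows essentially the same route as the paper's proof in Appendix~\ref{AppGG1}: start from the distribution-free inequality~\eqref{eq:deltan2}, apply the Allen--Cunneen factor $(c_A^2+c_S^2)/2$ to the two-phase $M/M/1$ edge wait and to the destination-queue term, keep $s_{migration}=r/\mu_2$ unchanged, and for the cloud use the $GI/G/k$ Allen--Cunneen form with the heavy-traffic surrogate $P_w\approx(\rho_{cloud}^k+\rho_{cloud})/2$. The only detail you leave implicit that the paper works out explicitly is the form of $c_S^2$ for the compound (mandatory $+$ optional) service time, which the paper derives via the law of total variance in Appendix~\ref{appA}; otherwise the arguments coincide.
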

Through the same methodology as in section  \ref{MMk Latency bound}, we extended our $M/M/k$ analysis to $GI/G/k$ in Appendix \ref{AppGG1}. The lower bound for $\Delta t$ for a $GI/G/k$ system is,
\begin{equation}
\Delta t > [\underbrace{
\frac{\lambda \left( \frac{1}{\mu_1^2} + \frac{r}{\mu_2^2} + \frac{r}{\mu_1 \mu_2} \right)}{1 - \frac{\lambda}{\mu_1} - \frac{r \lambda}{\mu_2}}
+ \frac{r\lambda}{\mu_1 (\mu_1- r\lambda)}] \cdot  
\frac{c_A^2 + c_S^2}{2}}_{w_{\text{edge}}}
 + \underbrace{\frac{r}{\mu_2}}_{s_{\text{migration}}} 
- \underbrace{
\frac{\rho_{\text{cloud}}^k + \rho_{\text{cloud}}}{2\mu_{\text{cloud}}(1-\rho_{\text{cloud}})} \cdot \frac{c^2_{A_{\text{cloud}}}+c^2_{S_{\text{cloud}}}}{2k}
}_{w_{\text{cloud}}}, \label{deltat_gg1_final1}
\end{equation}
   
Corollaries~\ref{corollary43} and~\ref{corollary2} also hold for the $GI/G/k$ case. Similar to Corollary~\ref{corollary3}, a lower bound on $t_{\text{cloud}}$ can be derived for the $GI/G/k$ system as well. Lemma \ref{lemmagg1} becomes particularly useful for deriving corollary \ref{ca2cor}



\begin{corollary}
The maximum workload coefficient of variation that can be handled at the edge cloud $c_{edge_A}^2$  has an upper bound limit of $\frac{2(\Delta t  - s_{migration}+w_{cloud})(1- \rho_{edge})}{\lambda ( \frac{1}{\mu_1^2} + \frac{r}{\mu_2^2} + \frac{r}{\mu_1 \mu_2})} - c_{S_{edge}}^2$ for the edge cloud performance to be better than a centralized cloud.
\label{ca2cor}
\end{corollary}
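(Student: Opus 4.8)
The plan is to obtain the bound as a direct algebraic rearrangement of the $GI/G/1$ edge waiting-time expression from Lemma~\ref{lemmagg1}, solving the edge-beats-cloud inequality for the arrival-process variability $c_{edge_A}^2$. My starting point is the condition~\eqref{deltat_gg1_final1}, which I would first rewrite in the compact form $\Delta t > w_{\text{edge}} + s_{\text{migration}} - w_{\text{cloud}}$, i.e.\ $w_{\text{edge}} < \Delta t - s_{\text{migration}} + w_{\text{cloud}}$. Because $c_{edge_A}^2$ enters only inside $w_{\text{edge}}$, and only through the linear factor $\tfrac{c_A^2 + c_S^2}{2}$, the whole problem reduces to inverting a single affine relation in that factor.

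Next I would identify the edge utilization explicitly. The denominator $1 - \tfrac{\lambda}{\mu_1} - \tfrac{r\lambda}{\mu_2}$ appearing in the first (Pollaczek--Khinchine) term of $w_{\text{edge}}$ is exactly $1 - \rho_{\text{edge}}$ for the two-phase load $\rho_{\text{edge}} = \tfrac{\lambda}{\mu_1} + \tfrac{r\lambda}{\mu_2}$, which lets me write that term as $\tfrac{\lambda(\frac{1}{\mu_1^2} + \frac{r}{\mu_2^2} + \frac{r}{\mu_1\mu_2})}{1-\rho_{\text{edge}}}\cdot\tfrac{c_A^2 + c_S^2}{2}$. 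For the clean bound I retain only this dominant first-phase term and drop the destination-queue contribution $\tfrac{r\lambda}{\mu_1(\mu_1 - r\lambda)}$, which by Corollary~\ref{corollary2} is a lower-order correction whenever the destination utilization $\rho_{\text{dest}}$ is small; keeping it would also multiply the variability factor and spoil the isolation of $c_A^2$.

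With that substitution the condition becomes linear in $\tfrac{c_A^2+c_S^2}{2}$ with a strictly positive coefficient $\tfrac{\lambda(\frac{1}{\mu_1^2}+\frac{r}{\mu_2^2}+\frac{r}{\mu_1\mu_2})}{1-\rho_{\text{edge}}}$ (positive because it is a waiting time and $\rho_{\text{edge}}<1$ for stability). I would therefore divide both sides by this coefficient---preserving the inequality direction since it is positive---multiply through by $2$, and subtract $c_S^2 = c_{S_{edge}}^2$, which yields precisely $c_{edge_A}^2 < \tfrac{2(\Delta t - s_{\text{migration}} + w_{\text{cloud}})(1-\rho_{\text{edge}})}{\lambda(\frac{1}{\mu_1^2}+\frac{r}{\mu_2^2}+\frac{r}{\mu_1\mu_2})} - c_{S_{edge}}^2$, the claimed upper bound.

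The main thing to get right---rather than a deep obstacle---is the justification for discarding the destination-queue term and confirming that the stated bound is non-vacuous. I would argue the former via Corollary~\ref{corollary2} (that term vanishes as $\rho_{\text{dest}}\to 0$ and is dominated by the first-phase term in the regime of interest), and note the latter requires $\Delta t - s_{\text{migration}} + w_{\text{cloud}} > 0$ so that the right-hand side is a meaningful positive ceiling on a squared coefficient of variation. When that quantity is non-positive, no admissible arrival variability lets the edge win, and the inequality should be read as a feasibility condition rather than a numeric threshold.
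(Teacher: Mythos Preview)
Your proposal is correct and follows essentially the same route as the paper: drop the destination-queue term by taking the best case $\rho_{\text{dest}}\to 0$, identify $1-\tfrac{\lambda}{\mu_1}-\tfrac{r\lambda}{\mu_2}=1-\rho_{\text{edge}}$, and algebraically solve the resulting linear inequality for $c_{A_{edge}}^2$. You even cite the more apt Corollary~\ref{corollary2} for the $\rho_{\text{dest}}\to 0$ justification (the paper's text references Corollary~\ref{corollary43}, which appears to be a mislabel), and your closing remarks on non-vacuity go a bit beyond what the paper spells out.
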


\noindent\textit{Proof: } 
Assuming the best case scenario from Corollary \ref{corollary43}, where $\rho_{dest} \rightarrow 0$, and therefore have no impact on the waiting time, then, manipulating equation \eqref{deltat_gg1_final1}, we get,
\begin{equation}
  c_{A_{edge}}^2 < \frac{2(\Delta t - s_{migration} + w_{cloud}) \overbrace{\left(1 - \frac{\lambda}{\mu_1} - \frac{r \lambda}{\mu_2} \right)}^{\text{remaining capacity of edge i.e. $(1- \rho_{edge})$}}
}{\lambda ( \frac{1}{\mu_1^2} + \frac{r}{\mu_2^2} + \frac{r}{\mu_1 \mu_2})} - c_{S_{edge}}^2
\end{equation}

This is an upper limit on the maximum workload coefficient of variation that an edge cloud can handle while still maintaining lower latencies compared to a remote cloud. As the edge utilization increases and approaches 1, the edge cloud cannot serve all requests in a workload spike and needs to drop or delay many of these requests in the spike. This is not surprising and is intuitive. However, an important factor that should be considered when deciding if a spike should be handled on the local edge cloud or should be offloaded is the latency difference between the edge cloud and the remote cloud.

\noindent\textbf{Implications: }
It is not sufficient to provision resources based on average response times. Systems must be dimensioned for tail performance, as even modest increases in arrival or service variability can significantly inflate delays, as characterized by the term $(c_A^2 + c_S^2)/2$, which directly scales waiting times beyond utilization effects. Planners can use the derived bound to compute a worst-case coefficient of variation and size buffers and servers for 95th or 99th percentile response times. 
For example, users in the same vehicular convoy or train car often hand off together, causing synchronized spikes in arrival streams that violate Poisson assumptions. Isolating traffic from these synchronized handoffs into a dedicated scheduling queue or resource pool, or throttling their migration rate $r$, could help prevent accidentally creating a 50‑session bursts at the edge.
By tracking the variance-induced latency penalty, operators can decide when to offload workloads dynamically, while SLA teams may set stricter admission control thresholds to avoid sudden spikes.
\subsection{\textbf{Model Validation}} \label{sec:validation}
\begin{figure*}
   \begin{minipage}{0.66\textwidth}
   \begin{subfigure}{0.45\columnwidth}
        \centering
        \includegraphics[width=\linewidth]{ 
     ./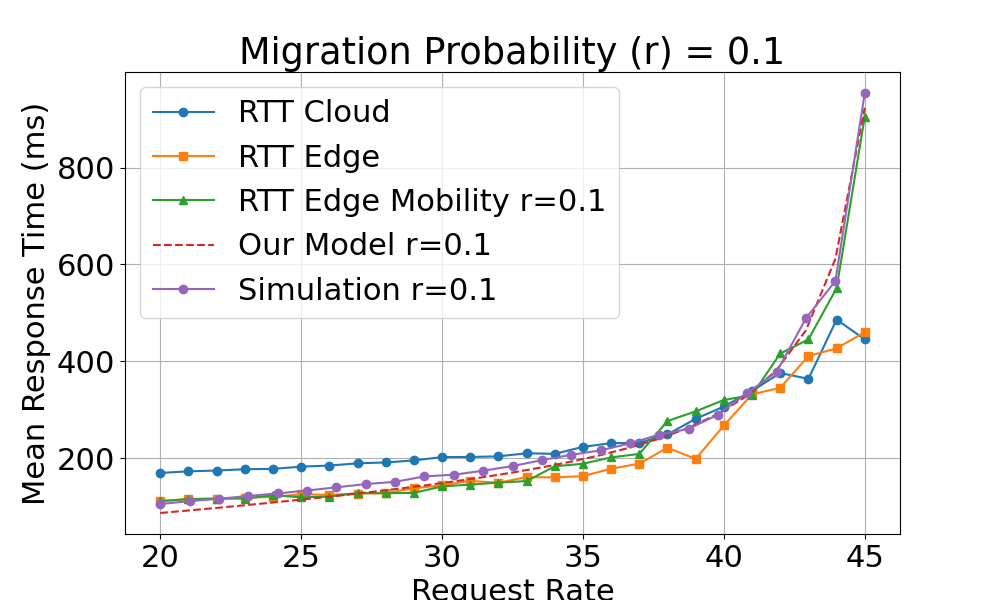}
        \caption{Mobility overhead r=0.1}
        \label{fig:mob}
    \end{subfigure}
     \begin{subfigure}{0.45\columnwidth}
        \centering
        \includegraphics[width=\linewidth]{ 
     ./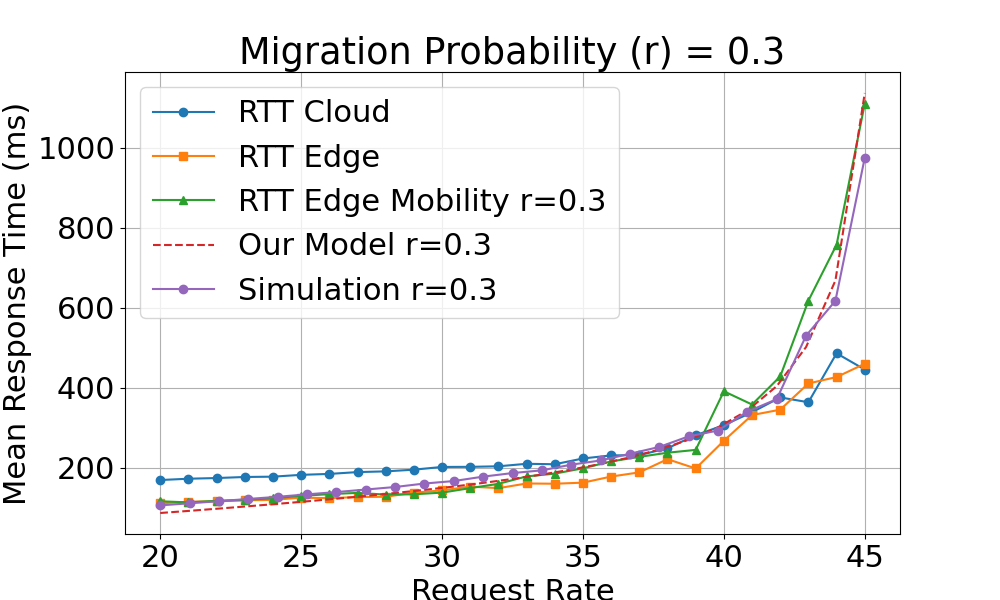}
        \caption{Mobility overhead r=0.3}
        \label{fig:mob1}
    \end{subfigure}    
    \caption{Response times with mobility and migration overhead \\
    for different migration probabilities compared with our $(M/M/k)$ \\
    model. ($\mu_1 = \mu_2= 50$)}
    \label{fig:mobmig}
\end{minipage}
   \begin{minipage}{0.33\textwidth}
        \centering
        \includegraphics[width=\linewidth]{./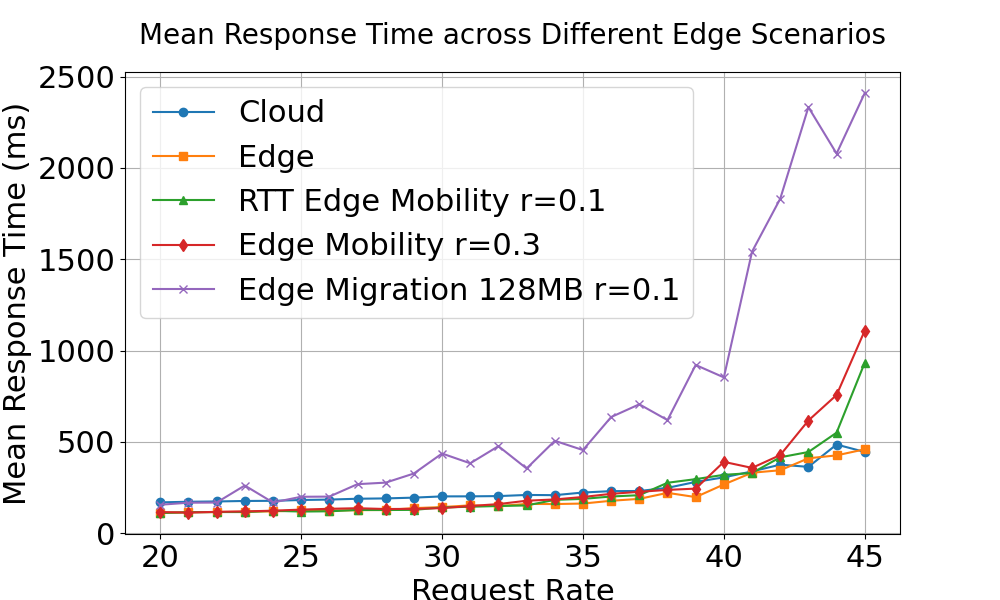}
        \caption{Migration of a 128 MB context}
        \label{fig:mig}
        \end{minipage}
    \end{figure*}
\noindent{\textbf{Experimental Setup}}  
We used Google Cloud ND2 servers (AMD EPYC, 4 virtual cores, 4 GB RAM). In the edge scenario, clients and servers are hosted within the same region (us-central) with approximately 1 ms latency. In the cloud scenario, the client is placed in a distant region (us-east4) with a latency of around 28 ms. The application is a Flask-based image classification service running a pre-trained ResNet50 model. A load generator simulates traffic by sending requests at a specified rate, following an exponential inter-arrival time distribution. We simulate mobility overhead through two mechanisms: (1) \textbf{Client Mobility: } where clients switch between edge sites with a migration probability of \(r = 0.1\) or \(r = 0.3\), forwarding requests to a second edge site upon migration; and (2) \textbf{Context Migration: } where a 128 MB session file is transferred between edge sites during migration, increasing the migration service time.

\noindent{\textbf{Results}}  
Figures~\ref{fig:mobmig} show the mean response times across different request rates in the cloud and edge scenarios. Without mobility, edge computing outperforms the cloud due to lower latency and proximity. However, as mobility overhead is introduced (Figures~\ref{fig:mobmig}a and~\ref{fig:mobmig}b), this advantage diminishes. Frequent client migrations lead to request forwarding and reprocessing delays, while context migration (Figure~\ref{fig:mig}) further increases service time and degrades performance.

We also validated our model with simulations with two M/M/1 queues: requests arrive at Queue 1 (edge server) following a Poisson process, and a fraction \(r\) migrates to Queue 2, incurring a delay \(\mu_2\). The utilization of Queue 2 (\(\rho_{dest}\)) increases with \(r\), which in turn affects waiting times in Queue 1. Results are averaged over 30 runs. Figures~\ref{fig:mobmig}a and~\ref{fig:mobmig}b show that simulated response times closely match empirical data, validating the model. Higher migration rates lead to queue congestion and increased response times, accurately capturing system behavior under varying workloads.

\section{Workload Skews}

In this section, we shift our focus to RQ2. Load spikes are among the major causes of performance degradation and service outages in networked systems. 
In this section, we evaluate analytically how load spikes are handled by edge versus remote clouds.
Our aim is to model how the increased number of servers allows a centralized cloud to handle more spikes compared to edge clouds, lowering the probability of delay for bursty workloads.
We assume that there are three types of edge sites, receiving different levels of arrival rates: one with a high demand, another with moderate demand, and a third with low demand.
We assume that an edge site experiencing a high load cannot load-balance the load with other edge sites, i.e., no migrations occur for balancing but only for user-mobility, and that each location operates independently. This imbalance results in increased queuing delays for users at high-load edge sites, as those requests must wait for the single-server capacity available there, even though other edge sites may have idle resources.
\subsection{Time-Varying skews overhead}
To model the dynamics of temporal workload skews across edge locations, we will use time-varying arrival queues. Each edge site will experience a time-dependent arrival rate, represented as $\lambda_{edge}(t)$. 
We again assume, an edge cloud is modeled using a $M_t/M/1$ queue and that there are $k$ distributed edge cloud locations. We also assume that a centralized cloud hosts an equivalent number of servers to the edge cloud, i.e., it has a total of $k$ servers, and thus can be modeled as a $M_t/M/k$ queuing system. The aggregated arrival rate at the cloud, $\lambda_{\text{cloud}}(t)$, is computed as the sum of the arrival rates at all $k$ edge sites:
\begin{equation}
 \lambda_{\text{cloud}}(t) = \sum_{i=1}^k \lambda_{edge(i)}(t).\label{eq:lambda_cloud}
\end{equation}
\begin{lemma}    
Assuming that the edge and cloud data centers are modeled as $M_t/M/1$ and $M_t/M/k$
queuing systems, respectively. In case of a non-stationary workload process, 
a tight lower bound for $\Delta t$ is $ \frac{\lambda(\mu_2^2 +r\mu_1^2+r\mu_1\mu_2)}{\mu_1\mu_2(\mu_
1\mu_2-\lambda\mu_2-r\lambda\mu_1)} + \frac{r\lambda}{\mu_1 (\mu_1- r\lambda)} + \frac{r}{\mu_{2}} - \frac{1}{\sqrt{k}\mu_{cloud}(1-\rho_{cloud}(t))}$
\end{lemma}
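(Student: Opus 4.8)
The plan is to reduce the non-stationary problem to the stationary $M/M/k$ bound already established in Equation~\eqref{eq:deltan6} by invoking the \emph{Pointwise Stationary Approximation} (PSA). The idea behind PSA is that when the aggregate arrival rate $\lambda_{\text{cloud}}(t)$ varies slowly relative to the service rate $\mu_{\text{cloud}}$, the queue is, at every instant, well approximated by a stationary queue whose parameters are frozen at their time-$t$ values. Under this approximation each steady-state quantity derived for the stationary $M/M/k$ system carries over with $\rho_{\text{cloud}}$ replaced by the instantaneous utilization $\rho_{\text{cloud}}(t) = \lambda_{\text{cloud}}(t)/(k\mu_{\text{cloud}})$, where $\lambda_{\text{cloud}}(t)$ is the aggregated rate from Equation~\eqref{eq:lambda_cloud}. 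An equivalent, slightly sharper route is the Modified Offered Load approximation, but the simple instantaneous-utilization substitution in the claimed bound points to PSA as the intended tool.

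First I would freeze time at an arbitrary instant $t$ and treat the cloud as an $M/M/k$ system with arrival rate $\lambda_{\text{cloud}}(t)$. Applying the QED-regime conditional waiting-time result of Whitt (Equation~\eqref{eq:ExpWM}) pointwise then gives $\mathbb{E}[w_{\text{cloud}}(t)\mid w_{\text{cloud}}(t)>0] = \frac{1}{\sqrt{k}\,\mu_{\text{cloud}}(1-\rho_{\text{cloud}}(t))}$, which is exactly the final cloud term of the stated bound. For the edge, since the workload is assumed balanced across the $k$ sites, each $M_t/M/1$ queue sees the same frozen-time arrival rate (the symbol $\lambda$ in the bound denotes this instantaneous value $\lambda_{\text{edge}}(t)$); hence the two-phase waiting-time expression and the migration-service term $r/\mu_2$ of Equation~\eqref{eq:deltan6} transfer unchanged, and only the cloud term acquires explicit time dependence. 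Substituting $w_{\text{edge}}$, $s_{\text{migration}}$, and the pointwise cloud term into the response-time inequality~\eqref{eq:deltan2} yields the claimed lower bound on $\Delta t$, with the same conditioning step ($\mathbb{E}[w_{\text{cloud}}\mid w_{\text{cloud}}>0]\ge\mathbb{E}[w_{\text{cloud}}]$) preserving the conservative character already argued in the $M/M/k$ case.

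The hard part will be justifying the validity of the PSA, that is, establishing the time-scale separation under which the frozen-time steady-state formulas are accurate. This requires arguing that the relaxation time of the $M_t/M/k$ queue is small compared with the time scale over which $\lambda_{\text{cloud}}(t)$ changes appreciably; otherwise transient build-up invalidates the pointwise substitution, especially as $\rho_{\text{cloud}}(t)\to 1$ where the relaxation time diverges. I would handle this by restricting attention to regimes in which $\rho_{\text{cloud}}(t)$ stays bounded away from $1$ and $\lambda_{\text{cloud}}'(t)$ is small, and by noting that the QED scaling (the $\sqrt{k}$ factor) already holds the system in a lightly-congested heavy-traffic regime where PSA is known to be accurate. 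A secondary subtlety is that $\Delta t$ is now a time-dependent threshold, so the statement must be read instantaneously, or evaluated at the worst-case (peak-load) instant $t$; interpreting the bound at that peak utilization is precisely what makes it \emph{tight} for non-stationary traffic.
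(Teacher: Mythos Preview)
Your proposal applies the Pointwise Stationary Approximation symmetrically to both the edge and the cloud, treating the $\lambda$ in the edge terms as the instantaneous rate $\lambda_{\text{edge}}(t)$. That is not what the paper does, and it misses the structural idea that makes the bound work. In the paper, the edge side is handled not by PSA but by \emph{Ross's conjecture} (proved by Rolski for $M_t/G/1$): the stationary $M/G/1$ delay with mean rate $\bar\lambda$ is a \emph{lower bound} on the true non-stationary delay. Hence the edge terms in the lemma carry the time-independent $\bar\lambda$, not an instantaneous value. On the cloud side, PSA is invoked not as an accurate approximation but specifically because, by the convexity of steady-state delay in the arrival rate, PSA \emph{overestimates} the true time-averaged $M_t/M/k$ delay; i.e., it furnishes an \emph{upper bound}. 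Combining ``stationary lower bound for the edge'' with ``PSA upper bound for the cloud'' is what yields the sandwich $\mathbb{E}(w)_{\text{stat}} < \mathbb{E}(w)_{\text{non-stat}} < \mathbb{E}(w)_{\text{PSA}}$ and hence the claimed conservative (best-case-edge, worst-case-cloud) bound on $\Delta t$.

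Your route has two concrete problems. First, by applying PSA to the edge you would be replacing $w_{\text{edge,non-stat}}$ by an \emph{upper} bound, while simultaneously replacing $w_{\text{cloud,non-stat}}$ by another upper bound; the two errors push the right-hand side in opposite directions, so no clean inequality survives. Second, your discussion of PSA validity (time-scale separation, $\rho$ bounded away from $1$, slow variation) is aimed at making PSA \emph{accurate}, but the paper never needs accuracy: it only needs the one-sided bounding property, which follows from convexity and holds without any slow-variation hypothesis. The missing ingredients are therefore Ross's conjecture for the edge and the convexity/PSA-as-upper-bound argument for the cloud; once you have those, the substitution into \eqref{eq:deltan2} is immediate and your PSA-regularity worries become unnecessary.
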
 
\begin{proof}
Ross's Conjecture \cite{ross1978average} states and Rolski \cite{rolski1981queues} proves that, for nonstationary $(M_t/G/1)$ queue, having arbitrary service times $S$ and arrivals following a non-stationary Poisson process, the average delay \(d\) is greater than in a stationary system with the same mean arrival and service rates. Ross provided a lower bound for the average delay, given by the inequality:
\begin{equation}
d \geq \frac{\bar\lambda  \mathbb{E}[S^2]}{2(1 - \rho)},\label{eq:ross_conjecture}
\end{equation}
Where right-hand side is the delay of a stationary M/G/1 system. Furthermore, \cite{rolski1981queues} demonstrates that greater non-stationarity in the arrival process leads to increased average delay. This indicates that time-varying workloads can introduce additional delays at the edge, which must be accounted for in system design and analysis. Hence, the stationary delay estimate is a lower bound for the non-stationary queue.


To model the waiting time of a cloud system with $k$ servers under time-varying arrivals (i.e., an $M_t/M/k$ queue), we employ the \textit{Pointwise Stationary Approximation (PSA)} \cite{whitt1991pointwise}. PSA estimates the behavior of a non-stationary queueing system by assuming that, at each time instant, the system behaves as if it were in a stationary state with the instantaneous arrival rate $\lambda(t)$. In other words, at time $t$ one uses the $M/M/k$ formula as if the arrival rate were fixed at $\lambda(t)$. 

Green et al.~\cite{green1991pointwise} show that the accuracy of PSA improves with the number of servers, making it well suited for modeling cloud‐scale systems. However, the PSA provides an upper bound on the true time‐averaged waiting time because it assumes that, at every instant, the system behaves like a stationary queue with the current arrival rate. In reality, queues cannot adjust instantaneously to changes in input. PSA ignores transient dynamics and instead reacts immediately to every fluctuation, leading to an overestimate of the delay. Furthermore, analysis in \cite{grassmann1983convexity} shows that, under moderate variability, the steady‐state waiting time is a convex increasing function of the arrival rate. Convexity implies that short periods of high arrival rates contribute disproportionately more to delay than equally long periods of low rates reduce it. As a result, PSA over predicts the true average delay and thus acts as a tight upper bound.
Therefore, 
\[
\mathbb{E}(w)_{\text{stat}} < \mathbb{E}(w)_{non-stat} < \mathbb{E}(w)_{\text{PSA}}.
\]
If we keep the stationary estimate of $\mathbb{E}[w_{edge}]$ from eq. \eqref{eq:deltan6}, and change the $\mathbb{E}[w_{cloud}]$ to PSA, such that,
\begin{equation}
 \Delta t >   \frac{\lambda(\mu_2^2 +r\mu_1^2+r\mu_1\mu_2)}{\mu_1\mu_2(\mu_
1\mu_2-\lambda\mu_2-r\lambda\mu_1)} + \frac{r\lambda}{\mu_1 (\mu_1- r\lambda)} + \frac{r}{\mu_{2}} - \frac{1}{\sqrt{k}\mu_{cloud}(1-\rho_{cloud}(t))}, \quad \forall t > 0\label{eq:deltan8nonstationary}
\end{equation}
then \eqref{eq:deltan8nonstationary} provides a conservative lower bound for $\Delta t$ in case of a non-stationary workload process, because we are taking the best case for the edge and the worst case for the cloud.

This completes the proof.    
\end{proof}
\begin{lemma}
Assuming that the edge and cloud data centers are modeled as $M_t/M/1$ and $M_t/M/k$ systems, respectively, and the arrivals have a \textbf{spike}, edge cloud offers lower response times whenever the network RTT difference between edge and cloud is greater than $\frac{\rho^{2}A^{2}}{
          {2\,\mu_{eff}}(1-\rho)^{3}(1+(\frac{\gamma}{\mu_{eff}})^{2})} + w_{edge_{stat}}+s_{mig}-w_{cloud}$
 (Proof in Appendix \ref{appspike})  \label{lemaspike}
\end{lemma}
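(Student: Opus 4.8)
The plan is to reuse the reduced inequality \eqref{eq:deltan2}, $\Delta t > w_{edge} - w_{cloud} + s_{migration}$, and to replace $w_{edge}$ by the waiting time experienced at a single $M_t/M/1$ edge site whose arrival stream contains a transient spike. The cloud term $w_{cloud}$ and the migration term $s_{migration}=r/\mu_2$ carry over unchanged from the $M/M/k$ lemma, so the whole task reduces to decomposing the edge delay as $w_{edge} = w_{edge_{stat}} + w_{spike}$, where $w_{edge_{stat}}$ is the stationary two-phase waiting time already derived in \eqref{eq:deltan6} and $w_{spike}$ is the excess delay attributable to the spike. Matching the claimed bound then amounts to showing $w_{spike} = \frac{\rho^{2}A^{2}}{2\mu_{eff}(1-\rho)^{3}(1+(\gamma/\mu_{eff})^{2})}$, with $\mu_{eff}$ the effective service rate of the two-phase edge queue and $\rho=\lambda/\mu_{eff}$ its baseline utilization.

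First I would model the spike as a multiplicative transient on the baseline rate, $\lambda_{edge}(t)=\lambda\,(1+A e^{-\gamma t})$, so that $A$ is the relative spike amplitude, $\gamma$ its decay rate, and $\delta\lambda(t)=\lambda A e^{-\gamma t}$ the perturbation. Because the stationary $M/M/1$ mean waiting time $W(\lambda)=\lambda/[\mu_{eff}(\mu_{eff}-\lambda)]$ is convex and increasing in $\lambda$ (the property invoked via \cite{grassmann1983convexity} in the preceding lemma), a second-order Taylor expansion of $W$ about $\lambda$ gives the excess over the stationary value as $\tfrac12 W''(\lambda)\,\overline{\delta\lambda^{2}}$, the first-order term being a shift in the effective baseline rate that is folded into $w_{edge_{stat}}$. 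A direct computation yields $W''(\lambda)=2/(\mu_{eff}-\lambda)^{3}=2/[\mu_{eff}^{3}(1-\rho)^{3}]$, which already supplies the $(1-\rho)^{-3}$ factor and, after substituting $\lambda=\rho\mu_{eff}$, the $\rho^{2}A^{2}/\mu_{eff}$ scale.

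The remaining factor $1/(1+(\gamma/\mu_{eff})^{2})$ comes from the fact that the queue cannot track the spike instantaneously: rather than feeding the raw perturbation into the convexity term (which is exactly what the pointwise-stationary approximation would do, and which the previous lemma notes over-estimates the delay), I would pass $\delta\lambda(t)$ through the queue's transient response. Modeling the linearized diffusion dynamics of the $M_t/M/1$ content as a first-order relaxation with cutoff set by $\mu_{eff}$, the effective mean-square perturbation seen by the queue is the decaying exponential filtered by this first-order system, whose response at decay rate $\gamma$ contributes the attenuation $1/(1+(\gamma/\mu_{eff})^{2})$; a fast spike ($\gamma$ large) is largely smoothed out, while a slow spike ($\gamma\to 0$) passes through unattenuated. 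Combining this filtered variance with the second-order sensitivity and collecting constants gives $w_{spike}=\frac{\rho^{2}A^{2}}{2\mu_{eff}(1-\rho)^{3}(1+(\gamma/\mu_{eff})^{2})}$; substituting into \eqref{eq:deltan2} together with the unchanged $w_{cloud}$ (taken at its stationary/PSA value, since the $k$-server cloud smooths the same spike) and $s_{migration}$ completes the bound.

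I expect the transient-filtering step to be the main obstacle. The convexity argument and the $W''$ computation are routine, but justifying the precise attenuation $1/(1+(\gamma/\mu_{eff})^{2})$ — and in particular that the cutoff is $\mu_{eff}$ rather than the slower heavy-traffic relaxation rate $\mu_{eff}(1-\rho)$ — requires a genuine transient analysis of the $M_t/M/1$ queue rather than PSA, and must be reconciled with the fact that PSA only furnishes an upper bound. The exact constant (the factor $\tfrac12$) is sensitive to the assumed spike profile and to whether one reports the peak or the time-averaged excess delay, so I would fix the exponential-decay shape up front and verify the constant by integrating the filtered response explicitly. Finally, as in the earlier lemmas, I would emphasize that the bound is deliberately conservative, taking the best case for the edge baseline and the worst case for the cloud.
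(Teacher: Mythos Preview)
Your overall scaffolding is the same as the paper's: start from the reduced inequality \eqref{eq:deltan2}, split the edge delay into the stationary two-phase part plus an excess due to the spike, and obtain the excess via a second-order Taylor expansion of the $M/M/1$ waiting-time function, with the second derivative supplying the $(1-\rho)^{-3}$ factor. That much is right.

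The gap is in how you model the spike and, consequently, how you obtain the attenuation factor $1/(1+(\gamma/\mu_{eff})^{2})$. The paper does \emph{not} take an exponentially decaying transient $\lambda(1+Ae^{-\gamma t})$; the ``spike'' here is the sinusoidal perturbation $\lambda_{edge}(t)=\bar\lambda(1+A\sin(\gamma t))$ introduced in the surrounding subsection, and $\gamma$ is a frequency, not a decay rate. The key tool you are missing is the Eick--Whitt $M_t/M/\infty$ result: for sinusoidal arrivals the mean occupancy is $m(t)=\bar\rho\,[1+Cf(t)]$ with $C=A/(1+\beta^{2})$, $\beta=\gamma/\mu_{eff}$, and $f(t)=\sin(\gamma t)-\beta\cos(\gamma t)$. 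The paper then uses $m(t)$ as a proxy for the instantaneous utilization of the single-server queue and Taylor-expands $w_q(\rho)=\rho/[\mu_{eff}(1-\rho)]$ in $\rho$ about $\bar\rho$. Time-averaging over one period kills the linear term because $\langle f\rangle_T=0$, while $\langle f^{2}\rangle_T=\tfrac12(1+\beta^{2})$ combines with $C^{2}=A^{2}/(1+\beta^{2})^{2}$ to produce exactly $A^{2}/[2(1+\beta^{2})]$; multiplying by $\tfrac12 w_q''(\bar\rho)=1/[\mu_{eff}(1-\bar\rho)^{3}]$ gives the stated excess.

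Your exponential-spike route does not reproduce this cleanly: the first-order term does not time-average to zero (so it cannot simply be ``folded into $w_{edge_{stat}}$''), and an exponential passed through a first-order relaxation does not yield the Lorentzian $1/(1+\beta^{2})$ form --- that form is the squared steady-state amplitude response of a first-order system to a \emph{sinusoid}, which is precisely what the $M_t/M/\infty$ occupancy formula encodes. Your instinct that the queue acts as a first-order low-pass filter with cutoff $\mu_{eff}$ is therefore correct in spirit, but you should apply it to the sinusoidal input and recognize the Eick--Whitt occupancy as the rigorous statement of that filtering; the worry about $\mu_{eff}$ versus $\mu_{eff}(1-\rho)$ then disappears, because the filtering happens at the infinite-server level before the single-server waiting-time nonlinearity is applied.
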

\subsection{Time-Varying Sinusoidal arrivals}
To model time-varying workloads, we take the example of sinusoidal arrivals. We note that we do the analysis for a system with a perfect sinusoidal workload, which is the simplest possible variation of a workload. The arrival rate at an edge site can be modeled as a sinusoidal function, representing time-varying demand patterns observed over a time period ranging from a few seconds to a 24-hour cycle. 
\[
\lambda_{\text{edge}}(t) = \bar{\lambda} \left(1 + A \sin(\gamma t)\right),
\]
where \( A \) is the relative amplitude (\( 0 \leq A \leq 1 \)), and \( \gamma \) is the frequency of oscillation.

The relative amplitude \( A \) indicates the maximum proportional deviation from the baseline \( \bar\lambda \), and characterizes the degree of non-stationarity in the input process. At any time \( t \), the arrival process is assumed to follow a Poisson distribution with mean \( \lambda_{\text{edge}}(t) \).

We recall from Section~\ref{MMk Latency bound} that each request first undergoes a mandatory phase with service rate \( \mu_1 \), followed by a migration phase with rate \( \mu_2 \) with probability \( r \).
To account for multi-stage service mechanism, we define an effective service rate, \( \mu_{\text{eff}} \), which represents the total processing capacity of the system. the total expected service time is then,
\begin{equation}
\mathbb{E}[S] = \frac{1}{\mu_1} + \frac{r}{\mu_2} \quad \Rightarrow \quad \mu_{\text{eff}} = \left( \frac{1}{\mu_1} + \frac{r}{\mu_2} \right)^{-1}= \frac{\mu_1\mu_2}{\mu_2 + r\mu_1}.
\label{eq:mueff}    
\end{equation}

With the total effective service rate defined as \(\mu_{\mathrm{eff}}\), accounting for both non‑migrating and migrating arrivals, we model the excess waiting time under a workload spike in Appendix \ref{appspike} as. 

\begin{equation}
\Delta w_q(A)
   =\frac{\rho^{2}A^{2}}{
          {2\,\mu_{eff}}(1-\rho)^{3}(1+(\frac{\gamma}{\mu_{eff}})^{2})}.  \label{excesswait}
\end{equation}
Equation \eqref{excesswait} is the extra waiting time that will be added to the stationary $E(w)$. Putting it in \eqref{eq:deltan2}, we get the final inequality.
Furthermore, if we assume the arrivals are sinusoidal and the capacity is allocated to handle the peak load, one can show that the average utilization will be $\bar\rho \leq 0.5$ (Corollary \ref{sinutilization} in Appendix).  
Moreover, if we assume that the peaks and troughs of the sinusoidal arrival rates at individual sites occur at different times. When these out-of-phase arrival rates are summed, the resulting aggregated rate at the cloud exhibits a smoothing effect (Corollary \ref{sinsumstatcloud} in Appendix). 

Analysis in \cite{green1991some} provides numerical confirmation and an extension of Ross's conjecture, showing that the expected delay is convex increasing in the relative amplitude of the arrival process. The experiments in \cite{green1991some} reveal that with a relative amplitude of just 0.33, the actual expected delay for non-stationary arrivals is more than double the stationary expected delay. When the relative amplitude increases to 1, the actual expected delay surpasses ten times the stationary estimate. This highlights the significant impact of arrival process variability on system performance, emphasizing the importance of accounting for non-stationarity in queueing analysis of edge and cloud.

\noindent\textbf{Implications: }
In equation \eqref{excesswait}, the quadratic dependence on amplitude $A$ implies that even modest fluctuations in the arrival rate can cause significant increases in delay. The dominant term $(1-\rho)^{-3}$ further highlights that systems operating at high utilization $\rho$ experience a cubic blow-up in delay, making the edge particularly sensitive to periodic bursts as $\rho \to 1$. Furthermore, the factor $(1+(\gamma/\mu)^{2})^{-1}$ indicates that when $\gamma \ll \mu_{\text{eff}}$, low-frequency oscillations cause much larger delays because the queue has time to accumulate backlog, whereas when $\gamma \gg \mu_{\text{eff}}$, high-frequency oscillations are naturally smoothed out by the service process, effectively averaging to the mean load. This explains why synchronized, low-frequency events, can create severe latency spikes. These insights suggest that variance mitigation techniques (e.g., traffic smoothing, predictive offloading, or migration throttling) are critical for edge systems.
\begin{lemma}
   Assuming that the edge and cloud data centers are modeled as $M_t/M/1$ and $M_t/M/k$
queuing systems, respectively and the arrivals are sinusoidal. During \textbf{rush hour}, if $\lambda_{max} > \mu$, edge cloud offers lower response times whenever the network RTT difference between edge and cloud is greater than \( \frac{ \left( (\bar{\lambda} - \mu_{\text{eff}})(\pi - 2\theta)
+ 2\bar{\lambda} A \cos(\theta) \right)}{\gamma \mu_{\text{eff}}}\). (Full proof in Appendix \ref{appb})
\end{lemma}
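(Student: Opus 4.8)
The plan is to treat the rush-hour regime as a deterministic fluid overload. When the instantaneous arrival rate exceeds the effective service rate $\mu_{\text{eff}}$ from \eqref{eq:mueff}, the single edge server stays continuously busy, so its output saturates at $\mu_{\text{eff}}$ and the backlog evolves as the integral of the net input rate $\lambda_{\text{edge}}(t)-\mu_{\text{eff}}$. The worst-case edge waiting time is then the time needed to drain the peak backlog, $w_{\text{edge}}=Q_{\max}/\mu_{\text{eff}}$, and substituting this dominant term into the master inequality \eqref{eq:deltan2} yields the stated lower bound on $\Delta t$.

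First I would locate the overload window. Writing the arrival rate in the phase variable $\phi=\gamma t$ as $\lambda_{\text{edge}}=\bar{\lambda}(1+A\sin\phi)$ and setting it equal to $\mu_{\text{eff}}$, I define the crossing angle $\theta$ through $\sin\theta=(\mu_{\text{eff}}-\bar{\lambda})/(\bar{\lambda}A)$. The rush-hour hypothesis $\lambda_{\max}=\bar{\lambda}(1+A)>\mu_{\text{eff}}$ is exactly what makes $\theta$ a well-defined angle in $(0,\pi/2)$ and guarantees the sine actually climbs above the threshold. Since $\sin\phi>\sin\theta$ precisely on $\phi\in(\theta,\pi-\theta)$, the server is overloaded over that interval, whose phase length is $\pi-2\theta$.

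Next I would accumulate the backlog. Changing variables from $t$ to $\phi$ (so $dt=d\phi/\gamma$), the peak queue content, reached at the downcrossing $\phi=\pi-\theta$, is
\begin{equation*}
Q_{\max}=\frac{1}{\gamma}\int_{\theta}^{\pi-\theta}\bigl(\bar{\lambda}(1+A\sin\phi)-\mu_{\text{eff}}\bigr)\,d\phi=\frac{(\bar{\lambda}-\mu_{\text{eff}})(\pi-2\theta)+2\bar{\lambda}A\cos\theta}{\gamma},
\end{equation*}
using $\int_{\theta}^{\pi-\theta}\sin\phi\,d\phi=2\cos\theta$. Dividing by $\mu_{\text{eff}}$ gives the worst-case edge delay, which is exactly the claimed numerator over $\gamma\mu_{\text{eff}}$. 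Because the cloud holds $k$ servers, the aggregated spike is smoothed (as in the earlier smoothing arguments) and does not drive the cloud into overload, so $w_{\text{cloud}}$ and the migration term are dominated by this backlog delay; keeping the edge term in \eqref{eq:deltan2} then produces the lemma's bound.

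The main obstacle is justifying the deterministic fluid reduction, not computing the integral, which is routine. I must argue that throughout $(\theta,\pi-\theta)$ the queue is large enough that the server never idles, so the output rate is genuinely pinned at $\mu_{\text{eff}}$ and the stationary Pollaczek–Khinchine-type corrections are negligible next to the linearly growing backlog; I also need to confirm that the worst-affected request is the one arriving at the peak $\phi=\pi-\theta$, so that $w=Q_{\max}/\mu_{\text{eff}}$ is the correct tail delay. A secondary point of care is carrying the two-phase migration structure through a single effective rate $\mu_{\text{eff}}$, which is licensed by \eqref{eq:mueff}.
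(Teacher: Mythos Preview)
Your proposal is correct and follows the same route as the paper: a deterministic fluid approximation of the rush-hour overload, locating the overload window $[\theta,\pi-\theta]$ by solving $\lambda(t)=\mu_{\text{eff}}$, integrating the net input $\lambda(t)-\mu_{\text{eff}}$ over that window to obtain the accumulated backlog, and dividing by $\mu_{\text{eff}}$ to get the dominant edge delay that is then substituted into \eqref{eq:deltan2} while the cloud and migration terms are dropped as lower order. Your crossing condition $\sin\theta=(\mu_{\text{eff}}-\bar\lambda)/(\bar\lambda A)$ keeps the $1/A$ factor that the paper's stated $\theta=\arcsin(\mu_{\text{eff}}/\bar\lambda-1)$ omits, but the method and the resulting expression are otherwise identical.
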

\noindent \textit{Proof summary:}
We assume:
\[
\bar\lambda = \frac{1}{T} \int_0^T \lambda(t) \, dt < \mu, \quad \bar\rho = \frac{\bar\lambda}{\mu} < 1,
\]
where \( T \) is the period of the sinusoidal input. Thus, \( \bar\lambda \) represents the average arrival rate over one cycle. Under this assumption, the system is expected to reach a periodic steady-state \cite{heyman1984asymptotic}, where its behavior repeats every \( T \) seconds.

The arrival rate \( \lambda(t) \) gradually increases as a "rush hour" approaches and eventually exceeds the server's service capacity \( \mu \). As the rush hour passes, the arrival rate decreases, and the queue that builds up during the peak period is gradually cleared before the next cycle. 
The maximum arrival rate,
\[
\lambda_{\max} = \bar{\lambda}(1 + A) > \mu.
\]
With the total service rate defined as \( \mu_{\text{eff}} \) \eqref{eq:mueff}, we analyze the queue's behavior during periods of overload using a \textit{fluid model} approximation \cite{newell1968queues,keller1982time}.  This approach treats the arrival and service processes as continuous flows, allowing us to describe the system dynamics using integrals instead of discrete-event analysis. The fluid model is particularly useful for analyzing overload conditions, where the arrival rate temporarily exceeds the service capacity, leading to queue buildup. By treating the system as a continuous flow, the model emphasizes on the cumulative impact of time-varying demand, while abstracting away stochastic fluctuations.
To compute the total backlog accumulated over the overload period \([t_1,t_2]\), we integrate the instantaneous excess:
\[
V(t_1,t_2) =\int_{t_1}^{t_2} \bigl(\lambda(t)-\mu_{\mathrm{eff}}\bigr)\,dt,
\]
where $t_1$ and $t_2$ are the start and end times of a time interval during which the arrival rate exceed the service rate, leading to potential backlog accumulation. We have derived $V(t_1, t_2)$ in Appendix \ref{appb} as,
\begin{equation}
V(t_1,t_2) = \frac{1}{\gamma} \left( (\bar{\lambda} - \mu_{\text{eff}})(\pi - 2\theta)
+ 2\bar{\lambda} A \cos(\theta) \right), \quad \text{where, } \quad \theta = \arcsin\left( \frac{\mu_{\text{eff}}}{\bar{\lambda}} - 1 \right) 
 \label{v}
\end{equation}

If the system is processing jobs at rate $\mu$
, and there is a backlog $Q(t)$, the average time it will take to serve them is:
\begin{equation}
E(w)_{rush} = \frac{ \left( (\bar{\lambda} - \mu_{\text{eff}})(\pi - 2\theta)
+ 2\bar{\lambda} A \cos(\theta) \right)}{\gamma \mu_{\text{eff}}}. \label{v1}
\end{equation}

In the stationary analysis of \( \mathbb{E}(w) \), the arrival rate \( \lambda \) must satisfy \( \lambda < \mu \) to ensure system stability. Under this assumption only, the stationary model provides a useful estimate for delay. However, in case of non-stationary arrivals, during rush hours, when the instantaneous arrival rate temporarily exceeds the service capacity, even brief overload periods can lead to a sharp and nonlinear increase in waiting times. 
To better approximate delays during such overloads, we use Equation~\eqref{v1} to quantify the backlog accumulated during these transient periods. 

Because $E(w)_{rush}$ dominates the behavior, we omitted additional terms ($s_{mig},w_{cloud}$). Therefore, during the rush hour if $\lambda_{max} > \mu$, then,
\begin{equation}
 \Delta t >   \frac{ (\bar{\lambda} - \mu_{\text{eff}})(\pi - 2\theta)
+ 2\bar{\lambda} A \cos(\theta)
}{\gamma \mu_{\text{eff}}} \label{eq:deltan8nonstationary2}
\end{equation}
\noindent\textbf{Implications: }
In Lemma \ref{lemaspike} we calculated the extra delay added to the expected waiting time. Equation \eqref{eq:deltan8nonstationary2} provides a picture of the time interval during which arrivals exceed service rates. The delay is not averaged over the low-arrival period but reflects the backlog users actually face during these overload intervals. The term $\pi - 2\theta$ represents the fraction of the sinusoidal cycle where $\lambda(t) > \mu_{\text{eff}}$, meaning that larger values of $\pi - 2\theta$ correspond to longer overload durations and thus more accumulated delay. The relative amplitude $A$ directly scales the additional delay contribution. As seen in Lemma \ref{lemaspike}, again, low-frequency spikes (small $\gamma$) are particularly damaging, as the system remains in overload for longer periods, while high-frequency oscillations are smoothed out by the service process. These results imply that edge systems must either provision sufficient headroom (larger $\mu_{\text{eff}}$) or implement traffic shaping and predictive offloading during rush-hour peaks, especially for bursty workloads such as AR/VR, interactive gaming, or ML inference.


We analysed how temporal variability in the arrival rates 
can affect the performance at the edge. The cloud system benefits from traffic aggregation, which smooth out fluctuations in the workload and brings the system closer to a stationary regime.

\subsection{Model Validation} \label{sec:validation}
 \begin{figure*}
   \begin{minipage}{.6\textwidth}
        \begin{subfigure}{0.5\columnwidth}
        \centering
        \includegraphics[width=\linewidth]{ 
     ./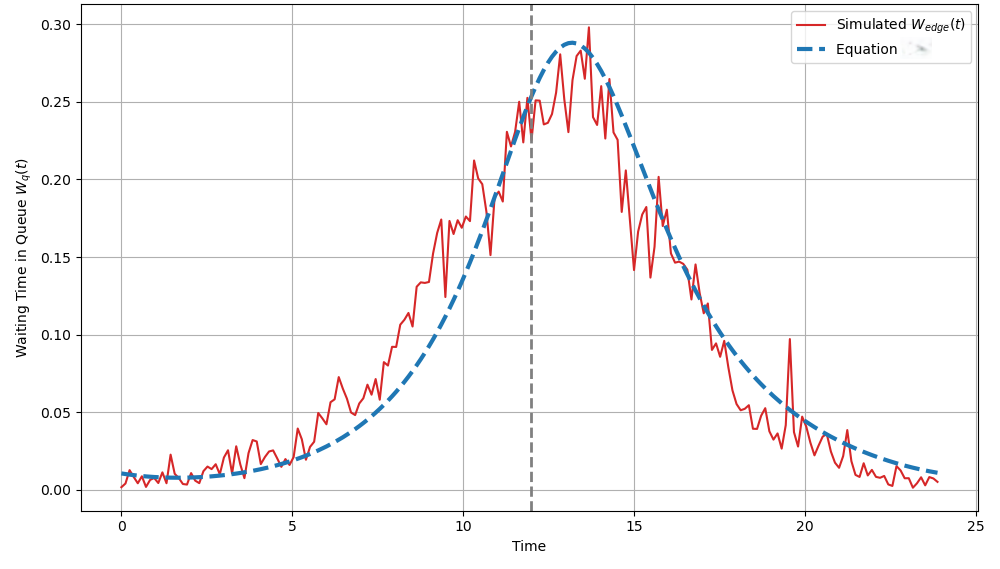}
        \caption{Simulation vs Equation  \eqref{eq:wedgesin} $\mu_{eff} = 20, \bar\lambda=10$, A = 0.7}
        \label{fig:sin07}
    \end{subfigure}
    \begin{subfigure}{0.45\columnwidth}
        \centering
        \includegraphics[width=\linewidth]{ 
     ./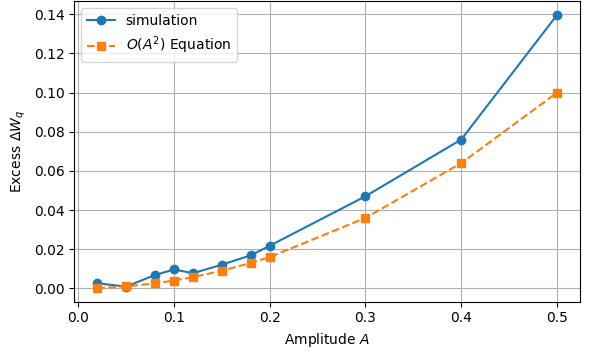}
        \caption{Simulation vs Equation  \eqref{excesswait} $\mu_{eff} = 100, \bar\lambda=80$}
        \label{fig:sin03}
    \end{subfigure}
       \caption{Simulation waiting times with sinusoidal arrivals 
      compared with Equation  \eqref{eq:wedgesin} and \eqref{excesswait}.}
    \label{fig:sineq28}
\end{minipage}
   \begin{minipage}{0.3\textwidth}
        \centering
        \includegraphics[width=\linewidth]{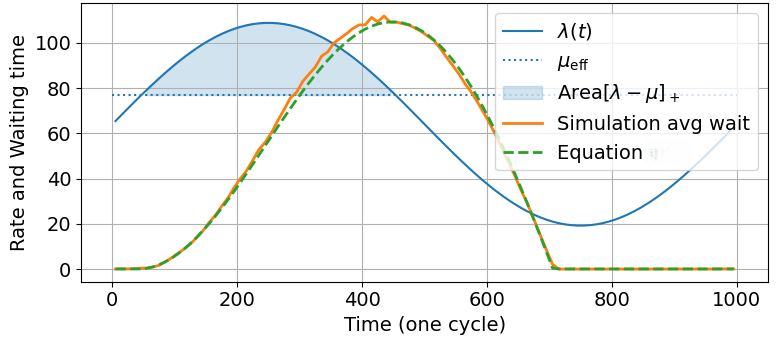}
        \caption{Simulation waiting times with sinusoidal arrivals during rush hour 
      compared with equation \eqref{v1}. ($\mu_1 = \mu_2= 100, \bar\lambda=64$, A=0.7)}
        \label{fig:sinresult}
        \end{minipage}
    \end{figure*}

\textbf{Excess waiting time of a sinusoidal spike}:
We simulated an \( M_t/M/1 \) queue and averaged the waiting times over 30 runs. Figure~\ref{fig:sin07} compares the simulated waiting time with the analytic prediction from Equation~\eqref{eq:wedgesin}. The model captures the shape and timing of the delay well, though it slightly underestimates the peak. A noticeable lag in the waiting time also appears at higher amplitude (\(A = 0.7\)).

Figure~\ref{fig:sin03} compares the excess waiting time from the analytic model (equation~\eqref{excesswait}) with simulation results. Since the model includes only the second-order term, it slightly underestimates the simulated excess waiting time, which is computed by subtracting the stationary \( M/M/1 \) waiting time from the simulated average waiting time.

\noindent\textbf{Fluid model (\(\lambda_{max} > \mu_{eff}\))}:
We track the average waiting time \( w(t) \) during the overload period. Figure~\ref{fig:sinresult} compares the simulation results with the time-varying delay approximation from equation ~\eqref{v1}. In Figure~\ref{fig:sinresult} \( w(t) \) follows the sinusoidal pattern with a phase lag due to queueing inertia.
\begin{table}[tb]
 \caption{Waiting times during rush hour under different amplitudes \( A \), based on simulation and equation~\eqref{v1}.}
 \footnotesize
  \label{tab:tb1}
  \centering
  \begin{minipage}{0.48\linewidth}
    \centering
    \caption*{$\mu_1 = \mu_2 = 32, \quad \bar\lambda=16$}
    \begin{tabular}{@{}rcccc@{}}
      \toprule
      $A$ & $E(w)_{\text{sim}}$ & $E(w)_{\text{sim}}^{\text{rush}}$ & $E(w)_{f}^{\text{rush}}$ & $\mathrm{Err}_{\text{rush}}$ \\
      \midrule
      0.1 & 0.0773 & 0.0000 & 0.0000 & 0.0000 \\
      0.2 & 0.0895 & 0.0000 & 0.0000 & 0.0000 \\
      0.3 & 0.1087 & 0.0000 & 0.0000 & 0.0000 \\
      0.4 & 0.1753 & 0.0000 & 0.0000 & 0.0000 \\
      0.5 & 0.3206 & 0.0000 & 0.0000 & 0.0000 \\
      0.6 & 1.7501 & 4.3048 & 2.0082 & 2.2966 \\
      0.7 & 5.2432 & 9.1399 & 7.8479 & 1.2919 \\
      0.8 & 11.7600 & 16.9049 & 15.1707 & 1.7341 \\
      0.9 & 19.8773 & 25.3995 & 23.9695 & 1.4299 \\
      1.0 & 30.4096 & 35.0717 & 31.9333 & 3.1384 \\
      \bottomrule
    \end{tabular}
  \end{minipage}
  \hfill
  \begin{minipage}{0.48\linewidth}
    \centering
    \caption*{$\mu_1= \mu_2 =512, \quad \bar\lambda=256$}
    \begin{tabular}{@{}rcccc@{}}
      \toprule
      $A$ & $E(w)_{\text{sim}}$ & $E(w)_{\text{rush}}^{\text{sim}}$ & $E(w)_{rush}^{\textit{f}}$ & $\mathrm{Err}_{\text{rush}}$ \\
      \midrule
      0.1 & 0.0049 & 0.0000 & 0.0000 & 0.0000 \\
      0.2 & 0.0055 & 0.0000 & 0.0000 & 0.0000 \\
      0.3 & 0.0070 & 0.0000 & 0.0000 & 0.0000 \\
      0.4 & 0.0103 & 0.0000 & 0.0000 & 0.0000 \\
      0.5 & 0.0240 & 0.0000 & 0.0000 & 0.0000 \\
      0.6 & 0.9741 & 2.5473 & 2.0082 & 0.5391 \\
      0.7 & 4.9876 & 8.6203 & 7.8479 & 0.7724 \\
      0.8 & 11.1964 & 16.1445 & 15.1707 & 0.9738 \\
      0.9 & 19.1184 & 24.5659 & 23.9695 & 0.5963 \\
      1.0 & 29.6241 & 34.4553 & 31.9333 & 2.5220 \\
      \bottomrule
    \end{tabular}
  \end{minipage}
\end{table}

Table~\ref{tab:tb1} compares the simulated waiting times \( E(w)_{\text{sim}}^{rush} \) with the analytic estimate \( E(w)_f^{rush} \) for amplitudes \( A \in [0.1, 1.0] \). When \( A \) is small, the system experiences no overload, and delays remain low. As \( A \) increases beyond 0.6, delay increases significantly. In this regime, the fluid rush-hour approximation \( E(w)_f^{\text{rush}} \) serves as a lower bound and provides a closer approximation. The approximation error decreases as the service rate \( \mu \) increases.

\section{Edge Cloud Capacity Tradeoffs}
In this Section, we shift our focus to trying to estimate how much extra capacity is needed per edge site to service requests such that the edge site performance is similar or better than an edge cloud. Here, we add to our model an aspect that has not been studied before, to the best of our knowledge; How does hosting virtualized services on both the edge and the cloud affect the capacity requirements. In this scenario, an edge site can host multiple virtualized computations, e.g., Virtual Machines or containers. 

\subsection{Modeling Edge Capacity}
Consider a distributed edge cloud with $k$ edge locations and an aggregate server capacity of $C_{edge}$ across the system versus a centralized cloud serving the total workload of the distributed edge locations using a single data center with a server capacity of $C_{cloud}$. We assume that servers in the edge and centralized cloud are virtualized and resources are allocated to applications in the form of virtual machines (VMs). We assume that VMs arrive for scheduling according to a Poisson process with rate $\lambda$.  In the distributed edge cloud, a new VM request is assumed to be directed to the nearest edge cloud location (i.e., the one with the least distance $d$ from the user).   Requests are assumed to be uniformly distributed across the $k$ edge sites. We assume that a customer specifies a VM size (e.g., in terms of number of CPU cores) along with each VM request. This is inline with cloud platforms that support virtual machines of different sizes. The VM size is assumed to be generally distributed. We assume $q$ to be lower bound of the number of virtual machines that can be hosted at each edge cloud site. Thus, $q$ represents the VM {\em packing factor} that is a function of how many VMs can be  packed onto each server and the maximum VM size that can be requested on a server.  Finally, we assume the lifetime of a VM is $\zeta$, which represents the time between a customer requesting a VM and terminating it; from a queuing standpoint $\zeta$ represents the service time of a request. Then, $\bar \zeta$ and $\bar \zeta^2$ denote the mean and variance of the service times (i.e., VM lifetimes).  The utilization of each edge cloud site is given by  $\rho=\lambda \bar \zeta/k$.

We note that the problem of multiplexing virtual machines on virtualized cloud servers has been modeled as an online bin-packing problem~\cite{wang2011consolidating,gupta2012online} and many VM placement policies based on bin packing have been proposed. Classical bin-packing is a well-studied combinatorial optimization problem 
\cite{coffman1999bin}, with some algorithms allowing requests to be queued until they are placed in a bin~\cite{shah2008bin}. However, the classical model for bin-packing is not suitable for modeling an edge cloud, since in our case the arrivals of items to place in the bin is localized to a certain edge, and thus cannot be placed on any other edge.

In order to capture these constraints, we model the edge cloud scenario as a Dynamic Traveling Repairman Problem (DTRP), a model while popular in fields such as logistics, but has not found much application in modeling computer systems.
Given the above system model, we have the following lemma for the distributed edge cloud capacity
\begin{lemma}
To give the same performance, the capacity of the centralized cloud can be calculated as a function of edge cloud as follows,
 \newline
  $C_{cloud} = C_{edge} \frac{(1-\rho_{edge}-\frac{\tau_{edge}}{C_{edge}})}{(1+1/q_{edge})(1-\rho_{cloud})}$
\end{lemma}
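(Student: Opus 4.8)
The plan is to model each edge site as an instance of the Dynamic Traveling Repairman Problem (DTRP) and then to equate an \emph{effective serviceable capacity} at the edge with that of the cloud, on the grounds that for queues of this type matched performance (equal expected system time) is governed by the slack capacity $C(1-\rho)$ between the service rate and the offered load. First I would recall the DTRP stability characterization (following Bertsimas and van Ryzin): VM requests arrive as a Poisson process of rate $\lambda$, uniformly over the service region, and the single server's total load is the sum of the on-site service load $\rho_{edge}$ and a travel (repair-tour) load incurred in reaching and placing each request. Writing the travel contribution as a capacity consumption $\tau_{edge}$ (in the same units as $C_{edge}$), the capacity still available for useful service at the edge is $C_{edge}(1-\rho_{edge})-\tau_{edge}$, equivalently $C_{edge}\bigl(1-\rho_{edge}-\tau_{edge}/C_{edge}\bigr)$, and stability demands this quantity be positive.

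Next I would account for the packing constraint. Because a customer-specified VM of general size must be placed whole onto a single edge server, and at least $q_{edge}$ VMs fit per server, a bin-packing argument (each opened bin can waste at most a $1/q_{edge}$ fraction of its capacity before a new server must be used) inflates the capacity actually required by the factor $(1+1/q_{edge})$. I would therefore divide the edge's available capacity by $(1+1/q_{edge})$ to obtain the capacity that is genuinely usable after packing losses.

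For the centralized cloud the same aggregate workload arrives at a single data center. The repair tour collapses, so the travel term vanishes, and the large shared pool removes the per-site quantization penalty; its effective serviceable capacity is simply $C_{cloud}(1-\rho_{cloud})$. Declaring the two deployments to give ``the same performance'' then means equating the two effective serviceable capacities,
\[
C_{cloud}(1-\rho_{cloud}) = \frac{C_{edge}(1-\rho_{edge})-\tau_{edge}}{1+1/q_{edge}},
\]
and solving for $C_{cloud}$ yields the claimed expression.

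The step I expect to be the main obstacle, and on which I would spend the most care, is justifying that ``same performance'' is faithfully captured by equality of the \emph{linear} slack capacities $C(1-\rho)$ rather than by equating the full DTRP delay formulas, whose travel terms scale like $(1-\rho)^{-2}$. I would need either a moderate-traffic / first-order reduction, or an argument that the dominant delay-matching condition collapses to matching the denominators $1-\rho$ (as the M/M/1 response time $\tfrac{1}{C(1-\rho)}$ suggests). A secondary technical point is pinning down the precise value of $\tau_{edge}$ from the DTRP travel-cost constant (which depends on region geometry, server velocity, and demand density) and verifying that the $(1+1/q_{edge})$ bin-packing bound carries over to the dynamic, queued placement setting rather than only the classical static one.
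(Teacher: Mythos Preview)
Your final equation is correct, and the ingredients you list (DTRP, travel overhead, packing penalty, cloud simplification) are the right ones. But the step you flag as ``the main obstacle'' is indeed a gap in your argument, and the paper's proof closes it in a way that sidesteps the whole issue rather than justifying a linear-slack heuristic.

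The paper does \emph{not} argue that matched performance reduces to equality of $C(1-\rho)$. Instead it invokes the Bertsimas--van~Ryzin heavy-traffic asymptotic for the DTRP packing policies directly:
\[
T \;\sim\; \gamma^2\,\frac{\lambda A\,(1+1/q)^2}{C^2 v^2\,\bigl(1-\rho-\tau/C\bigr)^2}\,,
\]
applies it once to the edge (keeping $q_{edge}$ and $\tau_{edge}/C_{edge}$) and once to the cloud (where $1/q$ and $\tau/C$ are argued negligible), and then sets $T_{edge}=T_{cloud}$. Because both expressions share the common prefactor $\gamma^2\lambda A/v^2$ and because every remaining quantity enters \emph{squared}, equating them and taking square roots yields
\[
\frac{1+1/q_{edge}}{C_{edge}\bigl(1-\rho_{edge}-\tau_{edge}/C_{edge}\bigr)} \;=\; \frac{1}{C_{cloud}(1-\rho_{cloud})}\,,
\]
which rearranges to the stated formula. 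So the linear slack equality you were trying to justify is not an assumption at all; it is a \emph{consequence} of the specific squared structure of the DTRP response-time asymptotic. Your worry about the $(1-\rho)^{-2}$ scaling is precisely what makes this work: both sides scale the same way, and the square root collapses it.

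A secondary difference: your $(1+1/q_{edge})$ comes from a static bin-packing waste bound, whereas in the paper this factor is already part of the DTRP response-time formula for the packing policy (it appears squared in the numerator of $T$). You therefore do not need a separate bin-packing lemma, nor do you need to ``verify that the bound carries over to the dynamic, queued placement setting''---that is exactly what the cited DTRP result provides.
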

\noindent\textbf{Proof:}
VM placement in distributed edge clouds is similar to bin packing where incoming VM requests have a geographic affinity to the nearest edge cloud site.  To model capacity in such a scenario, we use a result
from the classical work of Bertsimas and Van Ryzin~\cite{bertsimas1993stochastic} where results from geometrical probability, queuing theory, and combinatorial optimization were used to obtain results on the 
{\em Dynamic Traveling Repairman Problem (DTRP)}.  In DTRP, repair requests arrive at a particular geographic locations. There are $k$  cars with repair persons that travel with velocity $v$ to the locations of repair requests, and each repair request is satisfied by the nearest car. In our case, a car (i.e., repair person) is equivalent to a
distributed edge cloud location having latency $v$, that provides service, while a repair request is equivalent to a VM; similar to a repair request being serviced by the nearest traveling repair-person (car), VM requests are satisfied by the nearest distributed edge cloud. The DTRP analysis in  ~\cite{bertsimas1993stochastic} address the bin packing problem of how repair requests are mapped onto traveling repair persons (``bins''). The  response time $T$ (referred to as system time in ~\cite{bertsimas1993stochastic}) is the total time from the initiation of a VM request, i.e., when the user starts the upload process of the VM image to the edge site, to the time it departs (i.e., is terminated by the customer).  The analysis in ~\cite{bertsimas1993stochastic} provides packing policies for the DTRP problem with constant factor guarantees such that
\begin{equation}
    T \sim \gamma^2 \frac{\lambda A(1+1/q)^2}{C^2 v^2 (1-\rho-2\lambda \bar U/Cq)^2},  \label{eq:dtrpPack}
\end{equation}
as $\rho+\frac{2 \lambda \bar d}{v q} \longrightarrow 1$, i.e., at higher utilization and assuming $T$ is the finite response time of the packing policy (implying that VM lifetimes need to be finite). 
In this expression, $A$ is the area of a Euclidean
 service region, $U= \frac{\bar d}{v}$ is the mean time taken to upload a VM to the edge, and  $\gamma$ is a constant representing the grade of service. 
We assume that the  edge  cloud and the centralized cloud use identical packing policy, as described in~\cite{bertsimas1993stochastic}, with finite response times $T$. In this case, the utilization 
$\rho=\lambda \bar \zeta/s$, while $\tau_{edge}=2 \lambda \bar d/qv$, is the total expected upload time of the VM to the edge or the cloud.
For the edge cloud, Equation~\ref{eq:dtrpPack} becomes,
\begin{equation}
    T_{edge} \sim \gamma^2 \frac{\lambda A(1+1/q_{edge})^2}{C_{edge}^2 v^2 (1-\rho_{edge}-\frac{\tau_{edge}}{C_{edge}})^2},  \label{eq:edgeTu}
\end{equation}

For a centralized  cloud, with capacity $C_{cloud}$, 
the number of VMs that can be packed on cloud data center before it becomes fully utilized, is very large, and is at least equal to $q_{edge}*C_{edge}$, making both $1/q$, and $\frac{\tau_{cloud}}{C_{cloud}}$ negligible. For the cloud, Equation~\ref{eq:dtrpPack} becomes,
\begin{equation}
    T_{cloud} \sim \gamma^2 \frac{\lambda A}{C_{cloud}^2 v^2 (1-\rho_{cloud})^2}. \label{eq:remTu}
\end{equation}

If we wish to provision edge and cloud capacity such that both the distributed edge cloud and centralized 
cloud give the same packing performance under the same packing policy, 
 Equations~\ref{eq:edgeTu} and Equation~\ref{eq:remTu} need to be equal. Since both systems aim to serve the same area and the same request arrival rates, we obtain
\begin{equation}
    C_{cloud}=\frac{C_{edge}(1-\rho_{edge}-\frac{\tau_{edge}}{C_{edge}})}{(1+1/q_{edge})(1-\rho_{cloud})}. \label{eq:remTu3}
\end{equation}
\begin{corollary} \label{corr:q2} 
The capacity of the centralized cloud  needed to service the same VM workload is less than 
that of the distributed edge cloud, that is 
$C_{cloud} < C_{edge}$
\end{corollary}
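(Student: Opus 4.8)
The plan is to prove the corollary directly from the Lemma, i.e.\ from Equation~\eqref{eq:remTu3}, by showing that the multiplicative factor relating $C_{cloud}$ to $C_{edge}$ is strictly less than one. Dividing Equation~\eqref{eq:remTu3} by $C_{edge}>0$ reduces the claim $C_{cloud}<C_{edge}$ to the single inequality
\begin{equation}
\frac{1-\rho_{edge}-\tau_{edge}/C_{edge}}{(1+1/q_{edge})(1-\rho_{cloud})}<1,
\end{equation}
so the entire argument is an inequality on the four quantities $\rho_{edge}$, $\rho_{cloud}$, $q_{edge}$, and $\tau_{edge}/C_{edge}$, all of which are pinned down by the system model.

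First I would record the sign facts that hold by construction: the packing factor $q_{edge}$ is finite and positive, so $1+1/q_{edge}>1$ strictly; the expected upload time $\tau_{edge}=2\lambda\bar d/qv$ is strictly positive, so $\tau_{edge}/C_{edge}>0$; and stability of both queues forces $0\le\rho_{edge},\rho_{cloud}<1$, so $1-\rho_{edge}$ and $1-\rho_{cloud}$ are both positive. With these in hand, the cleanest case is when the two systems are dimensioned to a common utilization target $\rho_{edge}=\rho_{cloud}=\rho$. Then the ratio factors as
\begin{equation}
\frac{C_{cloud}}{C_{edge}}=\frac{1}{1+1/q_{edge}}\left(1-\frac{\tau_{edge}/C_{edge}}{1-\rho}\right),
\end{equation}
a product of two terms each strictly below one --- the first because $1/q_{edge}>0$, the second because $\tau_{edge}/C_{edge}>0$ while $1-\rho>0$ --- which immediately yields $C_{cloud}<C_{edge}$.

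The main obstacle is that the utilizations need not coincide: a priori $\rho_{cloud}$ could exceed $\rho_{edge}$ enough to shrink the denominator $1-\rho_{cloud}$ and threaten the bound. I would close this gap by appealing to the effective-load interpretation already baked into the DTRP expressions, arguing that the edge carries a strictly larger \emph{effective} utilization once the packing penalty $1/q_{edge}$ and the upload overhead $\tau_{edge}/C_{edge}$ are folded in, so that $\rho_{edge}+\tau_{edge}/C_{edge}\ge\rho_{cloud}$ for systems tuned to the same response time; this is the quantitative counterpart of the statistical-multiplexing and smoothing benefit established earlier via the empirical-rule argument ($\sqrt{k\lambda}<k\sqrt{\lambda}$). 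Under that assumption the numerator obeys $1-\rho_{edge}-\tau_{edge}/C_{edge}\le 1-\rho_{cloud}$, and multiplying the denominator by the strictly-larger-than-one factor $1+1/q_{edge}$ preserves the strict inequality, completing the proof. I expect the subtlety to lie entirely in justifying this utilization comparison cleanly, rather than in any calculation.
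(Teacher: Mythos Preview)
Your approach is essentially the same as the paper's: both start from Equation~\eqref{eq:remTu3}, observe that $1+1/q_{edge}>1$ and that $1-\rho_{edge}-\tau_{edge}/C_{edge}\le 1-\rho_{cloud}$ under ``equivalent server utilization,'' and conclude that the ratio $C_{cloud}/C_{edge}$ is strictly below one. The paper's proof is in fact terser than yours---it simply asserts the utilization comparison as a standing assumption rather than attempting to justify it---so your worry about the general case where $\rho_{cloud}$ might exceed $\rho_{edge}$ is something the paper does not address at all; your instinct that this is where the subtlety lies is correct, but the paper sidesteps it by fiat.
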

\noindent\textbf{Proof:}
Since $1-\rho_{edge}-\frac{\tau_{edge}}{C_{edge}}$ is necessarily less than 1 (and greater than 0), $1-\rho_{edge}-\frac{\tau_{edge}}{C_{edge}} \leq 1-\rho_{cloud}$ for equivalent server utilization, and $1+1/q_{edge}$ is necessarily greater than 1, then it follows that $C_{cloud}<C_{edge}$. 

\begin{corollary} \label{corr:q}For an edge cloud data centers with a large capacity $C_{edge}$, and when the system utilization at both the edge and remote clouds are equal, $\rho_{edge}=\rho_{cloud}$, then,
\begin{equation}
    \frac{C_{edge}}{C_{cloud}} = (1+1/q_{edge}). \label{eq:remTu2}
\end{equation}
\end{corollary}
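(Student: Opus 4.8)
The plan is to begin directly from Equation~\eqref{eq:remTu3}, the closed-form capacity relationship $C_{cloud}=\frac{C_{edge}\left(1-\rho_{edge}-\frac{\tau_{edge}}{C_{edge}}\right)}{(1+1/q_{edge})(1-\rho_{cloud})}$ established in the preceding lemma, and to isolate the ratio of interest by a single algebraic rearrangement. Cross-multiplying and solving for $C_{edge}/C_{cloud}$ gives $\frac{C_{edge}}{C_{cloud}} = \frac{(1+1/q_{edge})(1-\rho_{cloud})}{1-\rho_{edge}-\tau_{edge}/C_{edge}}$, so the entire corollary reduces to simplifying this right-hand side under the two stated hypotheses.

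First I would invoke the large-capacity assumption. Since the upload-time parameter $\tau_{edge}=2\lambda\bar d/(q v)$ depends only on the arrival rate, the mean distance, the packing factor, and the velocity, it does not itself grow with $C_{edge}$; hence the correction term $\tau_{edge}/C_{edge}$ vanishes as $C_{edge}$ becomes large, leaving $1-\rho_{edge}$ in the denominator. Second, I would apply the equal-utilization hypothesis $\rho_{edge}=\rho_{cloud}$, which makes the factor $(1-\rho_{cloud})$ in the numerator identical to the $(1-\rho_{edge})$ now appearing in the denominator. These cancel, yielding $\frac{C_{edge}}{C_{cloud}} = 1+1/q_{edge}$, exactly as claimed.

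The only step requiring genuine care — and hence the main, albeit minor, obstacle — is justifying that $\tau_{edge}/C_{edge}\to 0$ in the large-capacity regime. One must confirm that $\tau_{edge}$ is held fixed (or at least grows sub-linearly in $C_{edge}$) as capacity increases, since otherwise the correction term need not vanish and the clean ratio would not emerge. Given the explicit definition of $\tau_{edge}$, which is independent of $C_{edge}$, this limit is immediate and well-defined. I would close by noting for completeness that the result is fully consistent with Corollary~\ref{corr:q2}: because $1+1/q_{edge}>1$, the ratio exceeds unity, reaffirming $C_{cloud}<C_{edge}$ in this limiting regime as well.
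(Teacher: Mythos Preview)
Your proposal is correct and follows essentially the same approach as the paper: start from Equation~\eqref{eq:remTu3}, let $\tau_{edge}/C_{edge}\to 0$ under the large-capacity assumption, and cancel the $(1-\rho)$ factors using $\rho_{edge}=\rho_{cloud}$. The paper's proof is terser (it simply notes the limit and declares the result trivial), but the logic is identical.
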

\noindent\textbf{Proof:} Let the term $C_{edge}/C_{cloud}$ represent the scaling factor of the number of servers required at the edge to provide similar packing capabilities at the edge. The term $0<\rho_{edge}-\frac{\tau_{edge}}{C_{edge}}\leq1$, and if $C_{edge}$ is large enough, then $\tau_{edge}/C_{edge}\longrightarrow0$. The above corollary can therefore  be trivially proven.

\subsection{Model validation}
\noindent \textbf{Validation Setup:} In order to validate the above result, we use a real VM trace from Azure cloud services~\cite{cortez2017resource}. The dataset contains data on the arrival times, VM utilization, and sizes of over 2 Million VMs from Azure data centers in 2019. The average size of a VM is 4.75 cores, with the largest VM having 20 cores, and the smallest having 2 cores. Using the VM traces as a workload, we build a simulator that simulates edge setups of different sizes w.r.t. number of cores and/or servers per edge. The simulator is also capable of simulating remote large-scale clouds with different configurations. The simulator enables us to simulate edge sites with single, and with multiple servers, e.g., a small cluster of 10 servers with 64 cores each. This enables us to simulate different possible edge architectures, for example, some edge cloud designs suggest that each edge site should have a few racks~\cite{hilt2019future}, while others suggest an edge of a few or even a single server~\cite{choy2014hybrid, yin2016edge}.

\noindent \textbf{Results:}
The figures \ref{fig:nu10} and \ref{fig:ali-mem-band}  show how the edge infrastructure requires a greater number of servers compared to the cloud to maintain similar packing capabilities. Figure \ref{fig:nu10}  shows the scaling factor, which indicates the ratio of servers needed at the edge versus the cloud to achieve equivalent workload efficiency. When this factor is greater than 1, it means that more edge servers are necessary to match cloud performance. The plot reveals that at lower core counts per server, the edge requires significantly more servers, but as core densities increase, the scaling factor decreases, approaching 1.
Figures \ref{fig:ali-mem-band} compare different edge core configurations to cloud setups. As the edge core count nears the model’s predicted optimal values (96 cores/edge for 64-core cloud servers and 160 cores/edge for 128-core cloud servers), the error decreases significantly. This shows that, the edge requires more cores than the cloud to maintain comparable packing performance. 

 \begin{figure*}
    \begin{minipage}{0.3\textwidth}
    \centering
    \includegraphics[width=\textwidth]{./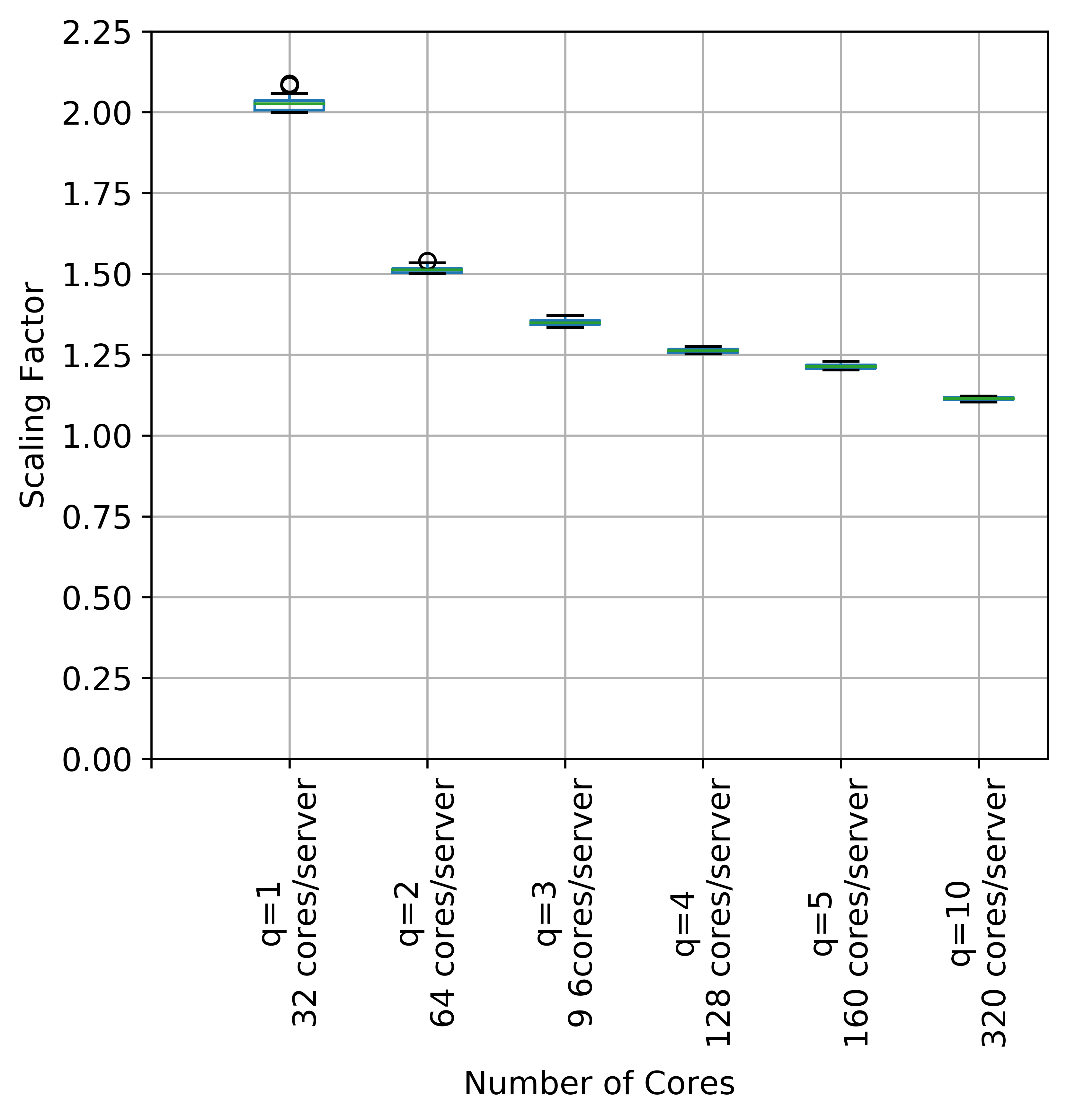}
    \caption{The model accurately predicts the scaling factor required between edge capacity and cloud capacity.}
    \label{fig:nu10}
\end{minipage}
   \begin{minipage}{0.66\textwidth}
    \centering
    \begin{subfigure}{0.43\textwidth}
        \centering
        \includegraphics[width=\linewidth]{ 
     ./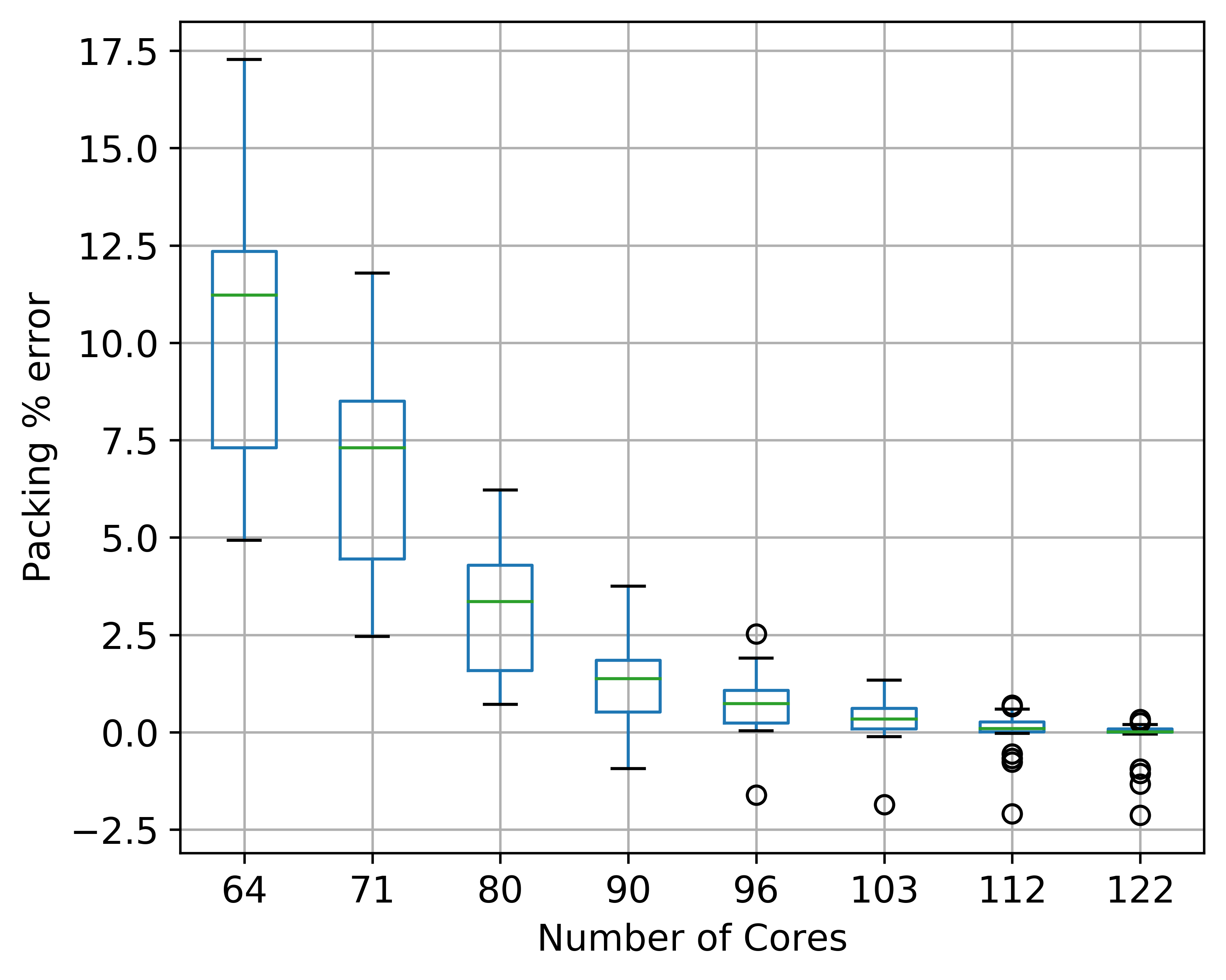}
        \caption{Relative error for various edge configurations (cores/edge) compared to a cloud with 1000 64-core servers (Model computes 96 cores/edge).}
        \label{fig:wikiWL}
    \end{subfigure}
    \begin{subfigure}{0.42\textwidth}
        \centering
        \includegraphics[width=\linewidth]{./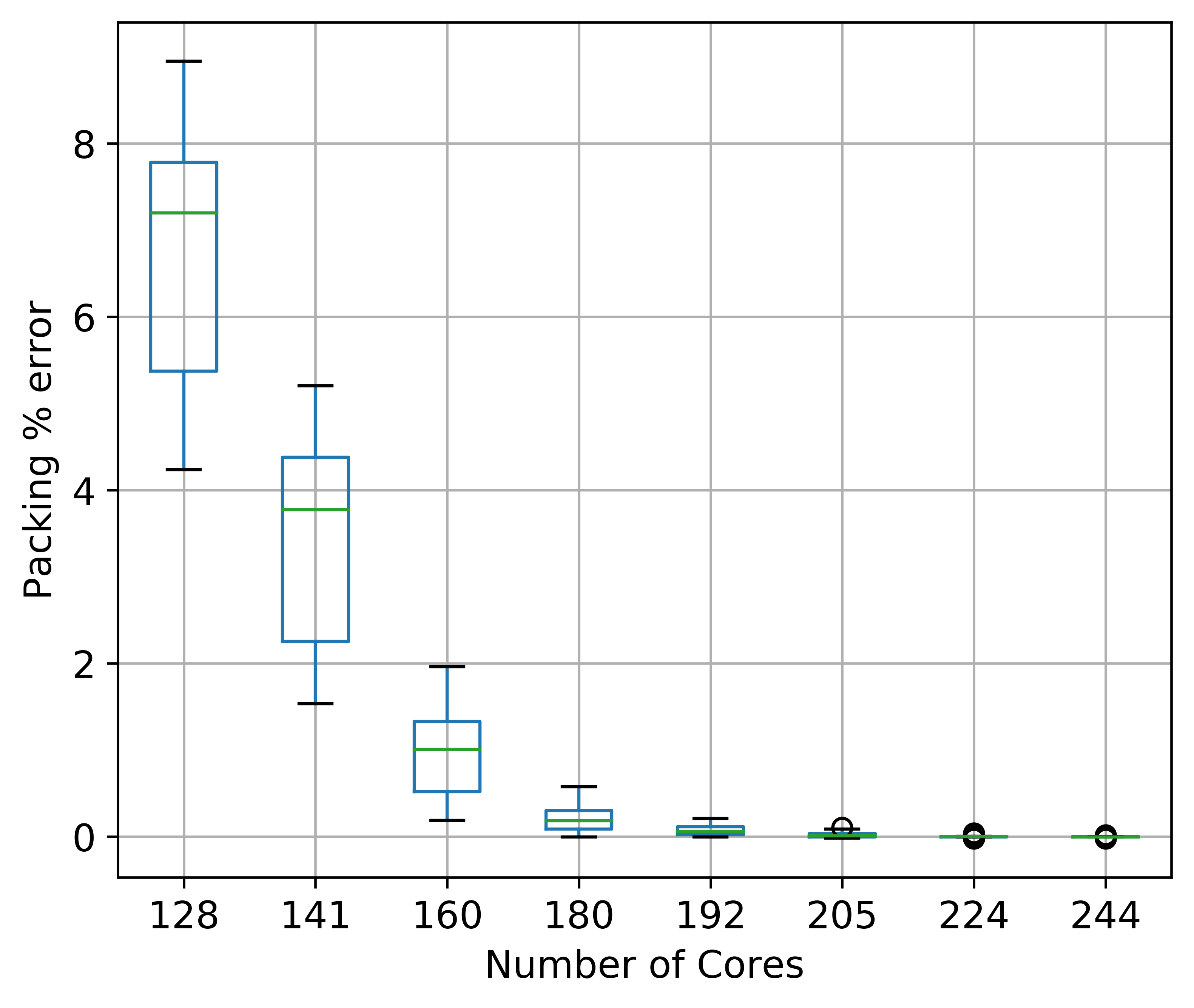}
        \caption{Relative error for various edge configurations (core/edge compared to a cloud with 1000 128-core servers (Model computes 160 cores/edge).}
        \label{fig:wikiWL2}
    \end{subfigure}
    \caption{There are diminishing returns w.r.t. the relative error compared to cloud allocation when the extra capacity is greater than what our model predicts.}
    \label{fig:ali-mem-band}
    \end{minipage}
\end{figure*}

\section{Further Implications}
Edge computing has recently been marketed as a one-size fits all solution to privacy, performance, and the next-generation of autonomous systems. However, as our models and simple experiments reveal, the reality is more complicated. 

\noindent \textbf{Placement Implications:} 
\label{sec:placement}
Since centralized clouds platforms such as Amazon EC2 and Azure have begun to
deploy additional data centers in various geographic regions, the latency to centralized clouds has begun to decrease~\cite{corneo2021surrounded}. Section 4.1 implies
that for the edge cloud to be beneficial, the edge systems need to be carefully placed such that migrations are minimized. This might require having edge resources that cover larger areas. However, this can complicate the actual deployments as larger coverage implies larger workloads, which in turn requires larger number of servers at the edge. If the cloud latency drops below a certain threshold through deployment of regional data centers, it has the potential (at low to medium utilization levels) to offer overall response times that are “good enough”
for edge applications and lower than what a smaller edge data center can provide (since the reduction
in network latency at an edge site no longer offsets the higher queuing delays in the edge data center). In addition, in the case when migrations are required, the amount of data transferred needs to be minimal in order to not cause severe delays.

Section 4.3 implies that spikes and high variability in the edge workload inter-arrival (and service) times also affects the latency trade-offs between choosing to deploy a workload on the edge or in a backend cloud.
When the workload is bursty in nature, the non-stationarity of the arrival rate  will be high. This implies that edge clouds are less suitable for applications where the request rates  vary considerably, e.g., due to mobility, and especially when these variations occur over a short period of time.

\noindent \textbf{Cost Implications:} 
The deployment and operation of edge-cloud infrastructures introduce several cost implications that must be carefully analyzed to optimize performance while maintaining economic feasibility. Compared to centralized cloud data centers, edge nodes are often deployed in distributed locations, increasing the costs associated with site acquisition, hardware procurement, and network integration. Moreover, managing multiple distributed networks of edge nodes entails higher monitoring and maintenance expenses than a centralized cloud infrastructure. Dynamic workload distribution between edge and cloud resources presents cost trade offs. Over-provisioning edge resources leads to under-utilization and wasted investment, while under-provisioning results in performance bottlenecks and potential SLA violations. The ability to dynamically scale workloads between edge and cloud is essential for cost optimization, requiring intelligent resource management. Application designers should consider these trade offs when designing their edge-cloud usage. While edge computing reduces latency and enhances real-time processing, the high costs of deployment and operation may make it unfavorable for certain use cases. In contrast, relying solely on the cloud may lead to increased latency issues. An optimal cost-performance balance requires strategic workload placement, hybrid processing models, and effective cost-benefit analysis. In addition, the ability of the edge to host multiple applications can significantly reduce the edge costs. To provide higher packing factors, Rack-Scale computers can be used~\cite{novakovic2016case} at the edge. These larger edge resources would reduce the scaling-factors significantly. Smaller and wimpy edge computing resources might be useful for some applications confined within a small geographic area, however, as has been known for years in the modeling research community, resource pooling provides many benefits to distributed processing systems~\cite{tsitsiklis2012power}.

\noindent \textbf{Capacity Planning Implications:}
Our analytic model shows that the ability to multiplex VMs onto a larger number of servers enable a centralized cloud to be more efficient in its packing ability and require a smaller overall capacity than a distributed edge cloud to service the same workload. This property can also be illustrated through a simple example.  Suppose that each server has a unit capacity of 1, and suppose that the workload consists for placement requests for 4 VMs of capacity 0.8, 0.8, 0.2 and 0.2.  In a centralized cloud, these four VMs can be packed onto two servers, each hosting a VM of capacity 0.8 and 0.2.  In a distributed edge cloud with two sites, assuming that the first two requests arrive at the first edge location and the remaining two at the second location, we see that a total of three servers are needed---two at the first edge location, each hosting a VM of size 0.8 and one server at the second location to host two VMs, each of size 0.2.  Thus, the edge cloud needs more servers than the centralized cloud for the same workload and our analytic results provide an expression to quantify the amount of over-provisioning needed in the edge cloud with respect to the centralized cloud.  Our corollary $6.3$ gives a simple $(1+1/q)$ rule to quantify the amount of over-provisioning.
Since $q$ represents the VM packing factor, if we assume that each server can pack a minimum of two VMs (i.e., q=2), 
it follows the edge cloud needs a factor of 1.5 more capacity than a centralized cloud (also illustrated in the above example). A higher packing factor $q$ and more dense packing of VMs implies  a proportionately smaller amount of edge over-provisioning. We note that these results hold for a spatially uniform distribution of VM requests across edge cloud sites.

\section{Related Work}
 Research on modeling clouds and data centers has attracted considerable attention~\cite{khazaei2012performance,bruneo2014stochastic,khazaei2013analysis,gandhi2010optimality,gandhi2013exact}. This has formed the basis for some more recent work to analytically study edge clouds, e.g., to decide where and when services should be migrated in response to user mobility and demand variation~\cite{urgaonkar2015dynamic},  analytical models to compare the performance and utilization between single level and hierarchical designs of the edge clouds~\cite{tong2016hierarchical}, and models to capture the energy consumption trade-offs when offloading the computations or running them locally~\cite{mao2016power}.
 Tsitsiklis and Xu~\cite{tsitsiklis2011power,tsitsiklis2012power} analyze a multi-server model capturing the trade-offs between using centralized and distributed processing using a fluid model approach.  They show that the average queue length in steady state scales as a function of the degree of the fraction of centralized servers $p$ and the traffic intensity, $\lambda$, as $log_{\frac{1}{1-p}}1/(1-\lambda)$ when the traffic intensity approaches 1, which is exponentially smaller than M/M/1 delay scaling.

\section{Conclusion}

In this paper, we analytically study some of the key trade-offs involved in offloading computations to edge clouds versus centralized clouds. Our study focuses on three important aspects that influence performance and cost. First, we analyze how mobility affects edge cloud performance, leading to increased response times. Second, we explore the impact of temporal workload variability and its effect on overall delay. Third, we examine how the VM packing efficiency of edge clouds is reduced compared to centralized clouds due to limited VM multiplexing capabilities at the edge. We believe that our results have significant practical implications for the future design and implementation of resource management algorithms in edge computing by providing a deeper understanding of these trade-offs.

\bibliographystyle{ACM-Reference-Format}
\bibliography{sample-base}

\appendix

\section{Deriving the Squared Coefficient of Variation for $GI/G/1$ with Second Optional Service} \label{appA}

In this section we derive the squared coefficient of variation of the total service time \( S \), denoted by:

\begin{equation}
c_S^2 = \frac{\text{Var}(S)}{(\mathbb{E}[S])^2} \label{eq:c_s_2}
\end{equation}

The service includes a mandatory first phase and an optional second phase, the distribution of \( S \) becomes a mixture.

\[
S =
\begin{cases}
v_1, & \text{with probability } 1 - r \\
v_1 + v_2, & \text{with probability } r
\end{cases}
\]

Where $v_1$ and $v_2$ are independent service times for the mandatory and optional phases, respectively, and $r$ is the probability that a job proceeds to the second phase.

The expected total service time is:

\begin{equation}
\mathbb{E}[S] = \mathbb{E}[v_1] + r \cdot \mathbb{E}[v_2] \label{eq:e_s}
\end{equation}

To model the conditional execution of the second service phase, we introduce a Bernoulli random variable \( B \sim \text{Bernoulli}(r) \), such that:

\[
B =
\begin{cases}
0, & \text{then } S = v_1 \quad \text{(with probability } 1 - r) \\
1, & \text{then } S = v_1 + v_2 \quad \text{(with probability } r)
\end{cases}
\]

We now compute $\text{Var}(S)$ using the law of total variance,

\begin{equation}
\text{Var}(S) = \mathbb{E}[\text{Var}(S \mid B)] + \text{Var}(\mathbb{E}[S \mid B]) \label{eq:var_S}
\end{equation}

Let the variances of the mandatory and optional service phases be $\text{Var}(v_1)$ and $\text{Var}(v_2)$ respectively,

The conditional variances of the total service time \( S \) given the outcome of \( B \) are then:

\[
\text{Var}(S \mid B = 0) = \text{Var}(v_1)
\]
\[
\text{Var}(S \mid B = 1) = \text{Var}(v_1 + v_2) = \text{Var}(v_1) + \text{Var}(v_2)
\]

\noindent
where \( v_1 \) and \( v_2 \) are independent.

Hence, the first term of \ref{eq:var_S} becomes:

\begin{equation}
\mathbb{E}[\text{Var}(S \mid B)] 
= (1 - r)\text{Var}(v_1) + r(\text{Var}(v_1) + \text{Var}(v_2)) 
= \text{Var}(v_1) + r \cdot \text{Var}(v_2) \label{eq:e_var_sb}
\end{equation}

We now compute the second term of \ref{eq:var_S}, $\text{Var}(\mathbb{E}[S \mid B])$:

\[
\mathbb{E}[S \mid B] =
\begin{cases}
\mathbb{E}[v_1], & \text{with probability } 1 - r \\
\mathbb{E}[v_1] + \mathbb{E}[v_2], & \text{with probability } r
\end{cases}
\]

We compute its variance using:

\begin{equation}
\text{Var}(X) = \mathbb{E}[X^2] - (\mathbb{E}[X])^2 \label{eq:var_x}
\end{equation}

Computing the mean:

\begin{equation}
\mathbb{E}[\mathbb{E}[S \mid B]] 
= \mathbb{E}[v_1](1 - r) + (\mathbb{E}[v_1] + \mathbb{E}[v_2]) r 
= \mathbb{E}[v_1] + r \cdot \mathbb{E}[v_2] \label{eq:e_e_sb}
\end{equation}

Computing the expected square:

\begin{equation}
\mathbb{E}[(\mathbb{E}[S \mid B])^2] 
= \left(\mathbb{E}[v_1]\right)^2 (1 - r) + \left(\mathbb{E}[v_1] + \mathbb{E}[v_2]\right)^2 r \label{eq:e_e_sb2}
\end{equation}

\[
= \left(\mathbb{E}[v_1]\right)^2 + 2r\, \mathbb{E}[v_1] \mathbb{E}[v_2] + r\, \left(\mathbb{E}[v_2]\right)^2
\]

Now putting \ref{eq:e_e_sb} and \ref{eq:e_e_sb2} in \ref{eq:var_x}:

\begin{equation}    
\text{Var}(\mathbb{E}[S \mid B]) 
= \left(\mathbb{E}[v_1]\right)^2 + 2r \mathbb{E}[v_1] \mathbb{E}[v_2] + r \left(\mathbb{E}[v_2]\right)^2 
- \left(\mathbb{E}[v_1] + r \mathbb{E}[v_2]\right)^2 
= r(1 - r) \left(\mathbb{E}[v_2]\right)^2
\label{eq:var_e_sb}
\end{equation}

Putting \ref{eq:e_var_sb} and \ref{eq:var_e_sb} into \ref{eq:var_S}:

\begin{equation}
\text{Var}(S) = \text{Var}(v_1) + r \cdot \text{Var}(v_2) + r(1 - r)(\mathbb{E}[v_2])^2 \label{eq:var_s_final}
\end{equation}

Putting \ref{eq:var_s_final} and \ref{eq:e_s} in \ref{eq:c_s_2}:

\[
c_S^2 
= \frac{
\text{Var}(v_1) + r \cdot \text{Var}(v_2) + r(1 - r)(\mathbb{E}[v_2])^2
}{
\left( \mathbb{E}[v_1] + r \cdot \mathbb{E}[v_2] \right)^2
}
\]

\section{Deriving latency bound for $GI/G/k$ system}
\label{AppGG1}
To model the expected waiting time $E(w)$ for a $GI/G/1$ queue, we use the Allen-Cunneen approximation~\cite{bolch2006queueing,KELLYBOOTLE1990247}, a result that has been shown to be accurate compared to exact values for heavy load regimes~\cite{whitt1993approximations}, and has found many practical applications~\cite{ahmad2010joint,hong2011dynamic}. Allen-Cunneen approximation provides a correction term over $M/M/1$ as,

\begin{equation}
\mathbb{E}[w]_{GI/G/1}\approx 
\mathbb{E}(w)_{M/M/1} \cdot
\frac{c^2_A+c^2_S}{2} \Rightarrow\frac{\rho}{\mu(1-\rho)}.\frac{c^2_A+c^2_S}{2},\label{eq:Allen1} 
\end{equation}

where, $c_A^2$ and $c_S^2$ are the squared Coefficients of Variation (CoV) of the inter-arrival time and the service times respectively. 

From \ref{MMk Latency bound}, we already have the  $\mathbb{E}(w)_{M/M/1}$ with second optional service to incorporate mobility. To extend it with the Allen-Cunneen approximation \eqref{eq:Allen1}, we must incorporate the effect of both $\mu_1$ and $\mu_2$ together in the $c_S^2$.  We derive the squared coefficient of variation of the total service time with the help of Law of Total Variance in Appendix \ref{appA}.

\begin{equation}
c_S^2 
= \frac{\text{Var}(S)}{(\mathbb{E}[S])^2} = \frac{
\text{Var}(v_1) + r \cdot \text{Var}(v_2) + r(1 - r)(\mathbb{E}[v_2])^2
}{
\left( \mathbb{E}[v_1] + r \cdot \mathbb{E}[v_2] \right)^2 \label{eq:cs2} 
},
\end{equation}
where $\mathbb{E}[v_1]$, $\mathbb{E}[v_2]$, $\text{Var}(v_1)$, and  $\text{Var}(v_2)$ are the mean and variance of two service times, $v_1$ and $v_2$, respectively. 


With $c_S^2$ defined as \eqref{eq:cs2},
\begin{equation}
\mathbb{E}[w]_{GI/G/1} \approx \frac{\lambda \left( \frac{1}{\mu_1^2} + \frac{r}{\mu_2^2} + \frac{r}{\mu_1 \mu_2} \right)}
{1 - \frac{\lambda}{\mu_1} - \frac{r \lambda}{\mu_2}}\cdot  
\frac{c_A^2 + c_S^2}{2} 
\label{eq:gg1_optional2}
\end{equation}

If there is no migration, i.e.\ $r = 0$ or migration does not take any time i.e. $\mu_2 \rightarrow \infty$, then equation \ref{eq:gg1_optional2} becomes
\( \frac{\rho}{\mu(1-\rho)}.\frac{c^2_A+c^2_S}{2}
\), which is the Allen-Cunneen approximation.

Similar to section \ref{MMk Latency bound}, we add the impact of the destination server utilization, and migration service time.
\begin{equation}\mathbb{E}[w_{dest}]_{(GI/G/1)} \approx 
\frac{r\lambda}{\mu_1 (\mu_1- r\lambda)}.\frac{c^2_{A_{dest}}+c^2_{S_{dest}}}{2}
\end{equation}

The total expected waiting time at the edge is given by:
\begin{equation}
\mathbb{E}[w_{edge}]_{GI/G/1} \approx \frac{\lambda \left( \frac{1}{\mu_1^2} + \frac{r}{\mu_2^2} + \frac{r}{\mu_1 \mu_2} \right)}
{1 - \frac{\lambda}{\mu_1} - \frac{r \lambda}{\mu_2}}\cdot  
\frac{c_A^2 + c_S^2}{2} + \frac{r\lambda}{\mu_1 (\mu_1- r\lambda)}.\frac{c^2_{A_{dest}}+c^2_{S_{dest}}}{2}
\label{eq:gg1_optional221}
\end{equation}

Assuming that the inter-arrival and service time variances are similar on both source and destination sites, 
\begin{equation}
\mathbb{E}[w_{edge}]_{GI/G/1} \approx \left[\frac{\lambda \left( \frac{1}{\mu_1^2} + \frac{r}{\mu_2^2} + \frac{r}{\mu_1 \mu_2} \right)}
{1 - \frac{\lambda}{\mu_1} - \frac{r \lambda}{\mu_2}}  
 + \frac{r\lambda}{\mu_1 (\mu_1- r\lambda)}\right]\cdot\frac{c_{A_{edge}}^2 + c_{S_{edge}}^2}{2}
\label{eq:gg1_optional22}
\end{equation}

For G/G/k systems, the approximation for the expected waiting time  is given by ~\cite{bolch2006queueing, whitt1993approximations},
\begin{equation}
    \mathbb{E}[w]_{G1/G/k} \approx \frac{P_w}{\mu(1-\rho)}.\frac{c^2_A+c^2_S}{2k}, \label{eq:Allen2}  
\end{equation}
where $P_w$ is the steady state probability that an arriving request has to wait in the queue for a server to become available. Bolch et al.~\cite{bolch2006queueing} have shown that $P_w$ can be approximated closely as,
\begin{equation}
    P_w \approx \begin{cases}
    \frac{\rho^k +\rho}{2} , & \text{if}\  \rho>0.7. \label{eq:Bosch1} \\
    \rho^\frac{k+1}{2},  & \text{if}\  \rho<0.7    
    \end{cases}
\end{equation}

Since the Allen-Cunneen approximation is more accurate in heavy traffic regimes, and since higher utilization, specially at the edge, has a larger impact on the performance, we will only consider the heavy traffic regime from Equation~\ref{eq:Bosch1}, i.e., $\rho>0.7$. 

\begin{equation}
    \mathbb{E}[w_{cloud}]_{GI/G/k} \approx \frac{\rho_{cloud}^k + \rho_{cloud}}{2\mu_{cloud}(1-\rho_{cloud})}.\frac{c^2_{A_{cloud}}+c^2_{S_{cloud}}}{2k}, \label{eq:cloudggk}  
\end{equation}

Putting $w_{edge}$, $s_{\text{migration}}$ and $w_{cloud}$ in \ref{eq:deltan2}, we get,



\begin{equation}
\Delta t > [\underbrace{
\frac{\lambda \left( \frac{1}{\mu_1^2} + \frac{r}{\mu_2^2} + \frac{r}{\mu_1 \mu_2} \right)}{1 - \frac{\lambda}{\mu_1} - \frac{r \lambda}{\mu_2}}
+ \frac{r\lambda}{\mu_1 (\mu_1- r\lambda)}] \cdot  
\frac{c_A^2 + c_S^2}{2}}_{w_{\text{edge}}}
 + \underbrace{\frac{r}{\mu_2}}_{s_{\text{migration}}} 
- \underbrace{
\frac{\rho_{\text{cloud}}^k + \rho_{\text{cloud}}}{2\mu_{\text{cloud}}(1-\rho_{\text{cloud}})} \cdot \frac{c^2_{A_{\text{cloud}}}+c^2_{S_{\text{cloud}}}}{2k}
}_{w_{\text{cloud}}}, \label{deltat_gg1_final}
\end{equation}

which completes the proof.
    
\section{Deriving excess waiting time due to sinusoidal spike}\label{appspike}
Eick \cite{eick1993mt} and Whitt \cite{whitt2014steady}, studied \( M_t/M/\infty \) queues with sinusoidal arrival rates, and showed that the number of busy servers follows a Poisson distribution. As a result, the system can be characterized by its mean occupancy \( m(t) \), which varies periodically with the arrival rate \( \lambda(t) \) as follows,

\begin{equation}
m(t) \;=\; \frac{\bar{\lambda}}{\mu_{\mathrm{eff}}}
\;\Biggl[\,
1
\;+\;
\frac{A}{1 + \bigl(\tfrac{\gamma}{\mu_{\mathrm{eff}}}\bigr)^{2}}
\Bigl(
\sin\bigl(\gamma t)
\;-\;
\frac{\gamma}{\mu_{\mathrm{eff}}}\,\cos\bigl(\gamma t)
\Bigr)
\Biggr].  \label{eq:mt}  
\end{equation}

The system’s congestion response lags behind the time‐varying demand. Therefore,  \cite{eick1993mt} show that $m(t)$ has a time lag of,
\begin{equation}
\quad lag(\gamma)\;=\;\frac{\cot^{-1}(1/\gamma)}{\gamma},    
\end{equation}

Although equation \eqref{eq:mt} is derived for infinite-server queues, where no waiting occurs, the authors in \cite{eick1993mt} note that it also serves as a good approximation for \( M_t/M/k \) systems. Our simulations show that it remains a reasonable approximation even for \( M_t/M/1 \) queues., provided that \( \lambda(t) < \mu \) at all times, which is the stability condition. 

Since \(m(t)\) denote the expected number of busy servers in an \(M_t/M/\infty \) system at time \(t\), if we have an \(M_t/M/k\) (or \(M_t/M/1\)) system, then on average \(m(t)\) of its \(k\) servers would be busy, so the instantaneous utilization can be approximated by,
\[
\rho_{\mathrm{eff}}(t)\;\approx\;\frac{m(t)}{k}, \quad \text{for } k=1,\quad \frac{m(t)}{1} = m(t)
\]

\[
\text{Let:}\qquad
u=\bar\rho=\frac{\bar\lambda}{\mu},\quad
\beta=\frac{\gamma}{\mu},\quad
C=\frac{A}{1+\beta^{2}},\quad
f(t)=\sin(\gamma t)-\beta\cos(\gamma t),
\]
then,
\begin{equation}
    m(t)=u\,[\,1+C\,f(t)\,].
\end{equation}

For an $M/M/1$ queue, the waiting time is given by \( w_q(\rho) = \frac{\rho}{\mu(1-\rho)} \). Assuming that \( m(t) \) approximates instantaneous utilization \( \rho(t) \), we write:
\begin{equation}
w(t)=\frac{m(t)}{\mu_{eff}\,[1-m(t)]}.\label{eq:wedgesin}    
\end{equation}
Let \( \delta(t) \) be a small perturbation added to the constant \( u \):
\[
m(t)=u+\delta(t),
\qquad
\delta(t):=uC\,f(t).
\]

If we nudge the input of a smooth function (such as \( w_q(\rho) \)) from \( u \) to \( u+\delta(t) \),
the output changes only slightly. We apply a Taylor expansion:

\[
w(u+\delta(t))\;\approx\;
\underbrace{w(u)}_{\text{value at }u}
+\underbrace{w'(u)}_{\text{slope at }u}\,\delta(t)
+\tfrac12\underbrace{w''(u)}_{\text{curvature at }u}\,\delta(t)^{2}.
\]

Here, the slope is the first derivative (rate of change), and the curvature is the second derivative. 
Higher powers \(\delta(t)^{3}, \delta(t)^{4}, \dots\) are increasingly
small and will be grouped into the symbol \(O(A^{3})\).

The required derivatives of \( w_q(\rho) = \frac{\rho}{\mu(1-\rho)} \) are:
\[
w_q'(u)=\frac{1}{\mu(1-u)^{2}},
\qquad
w_q''(u)=\frac{2}{\mu(1-u)^{3}}.
\]

Substituting into the Taylor expansion:

\[
w(t)=w\bigl(u+\delta (t)\bigr)
\approx
\frac{u}{\mu(1-u)}
+\frac{uA\,f(t)}{\mu(1-u)^{2}}
+\frac{u^{2}A^{2}f(t)^{2}}{\mu(1-u)^{3}}
+O(A^{3}).
\]

Because the average of a sinusoid over one complete period \(T\) is:
\[
\langle f(t) \rangle_T = 0, \qquad
\langle f(t)^{2} \rangle_T = \tfrac12(1+\beta^{2}),
\]
the linear term disappears, and we obtain:

\[
\mathbb{E}[w(t)]
=\frac{u}{\mu(1-u)}
+\frac{u^{2}A^{2}}{2\,\mu(1-u)^{3}(1+\beta^{2})}
+O(A^{3}).
\]

The stationary queue delay (\(A=0\)) is:
\[
\mathbb{E}[w]_{\text{stat}}=\frac{u}{\mu(1-u)} .
\]

Hence, the \emph{excess} mean waiting time is:
\begin{equation}
\Delta w_q(A)
   =\frac{u^{2}A^{2}}
          {2\,\mu(1-u)^{3}(1+\beta^{2})}
   +O(A^{3}). 
\end{equation}

\begin{corollary}
If the arrival are sinusoidal, for queue to remain stable, the average utilization $\bar\rho$ must remain below 0.5. \label{sinutilization}
\end{corollary}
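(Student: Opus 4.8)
The plan is to read off the stability requirement directly from the peak of the sinusoidal arrival process and then push the amplitude to its extreme value. Recall that the arrival rate is $\lambda(t)=\bar\lambda\,(1+A\sin(\gamma t))$ with $0\le A\le 1$, that the effective service rate is $\mu_{\text{eff}}$ from \eqref{eq:mueff}, and that the average utilization is $\bar\rho=\bar\lambda/\mu_{\text{eff}}$ (the time average of $\lambda(t)/\mu_{\text{eff}}$ over one period, since the sinusoidal term integrates to zero). First I would fix the notion of stability appropriate to this regime: the delay formula $w(t)=m(t)/(\mu_{\text{eff}}(1-m(t)))$ from \eqref{eq:wedgesin}, together with the occupancy approximation $m(t)$ of \eqref{eq:mt}, remains finite and meaningful only while the instantaneous arrival rate stays strictly below capacity, i.e.\ $\lambda(t)<\mu_{\text{eff}}$ for all $t$. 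Once this is violated the system leaves the periodic steady state and enters the overload (rush-hour) regime with growing backlog that is handled separately by the fluid model, so $\lambda(t)<\mu_{\text{eff}}$ for all $t$ is precisely the condition keeping the present analysis valid.

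Next I would identify the binding instant. Since $\max_t \lambda(t)=\lambda_{\max}=\bar\lambda(1+A)$, the requirement $\lambda(t)<\mu_{\text{eff}}$ for all $t$ is equivalent to $\bar\lambda(1+A)<\mu_{\text{eff}}$, that is $\bar\rho\,(1+A)<1$, which rearranges to $\bar\rho<1/(1+A)$. This already shows that larger amplitudes force lower admissible average utilizations. I would then push the amplitude to its maximal physical value: the constraint $A\le 1$ is exactly the requirement that $\lambda(t)\ge 0$ throughout the cycle, since at $A=1$ the trough just touches zero and any larger $A$ would make the arrival rate negative. Substituting $A=1$ into $\bar\rho<1/(1+A)$ yields $\bar\rho<1/2$; equivalently, provisioning capacity exactly to the peak, $\mu_{\text{eff}}=\lambda_{\max}$, gives $\bar\rho=1/(1+A)$, which equals $1/2$ at $A=1$. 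This establishes that for the most extreme sinusoidal variation the average utilization cannot exceed one half if the queue is to stay out of overload, completing the proof.

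I expect the only real obstacle to be conceptual rather than computational: justifying that the relevant threshold is the pointwise condition $\lambda(t)<\mu_{\text{eff}}$ rather than the weaker long-run condition $\bar\lambda<\mu_{\text{eff}}$, which on its own would only give $\bar\rho<1$. The factor of two is entirely an artifact of demanding no transient overload so that the non-rush-hour delay expressions remain valid, and I would state this dependence explicitly so that the $0.5$ bound is not misread as an unconditional single-server stability limit.
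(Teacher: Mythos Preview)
Your proposal is correct and follows essentially the same route as the paper: the paper also takes the pointwise stability condition $\lambda_{\max}<\mu_{\text{eff}}$, sets $A\to 1$ so that $\lambda_{\max}=2\bar\lambda$, and divides through by $\mu_{\text{eff}}$ to obtain $\bar\rho<1/2$. Your write-up is in fact more careful than the paper's, since you make explicit that the $0.5$ bound comes from the pointwise (no-overload) requirement at maximal amplitude rather than from the long-run condition $\bar\lambda<\mu_{\text{eff}}$, a caveat the paper leaves implicit.
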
 
\begin{proof}  
When the sinusoidal amplitude approaches its maximum value, \(A \to 1\), the
arrival rate is,
\[
\lambda_{\max} \;=\; \bar{\lambda}\bigl(1 + A\bigr)
\;=\;
2\,\bar{\lambda}.
\]

System stability requires that the instantaneous arrival rate never exceed the
effective service capacity, hence
\[
2\bar{\lambda} < \mu_{\mathrm{eff}}.
\]

Dividing both sides by \(\mu_{\mathrm{eff}}\),
\[
\bar\rho = \frac{\bar{\lambda}}{\mu_{\mathrm{eff}}}<\frac{1}{2}\]
\end{proof} 

\begin{corollary}   
If the sinusoidal arrivals at edge sites have a uniform diversity in phase shifts $\phi_i$, as $k \rightarrow \infty $ cloud workload becomes stationary, consequently offering lower waiting times. \label{sinsumstatcloud}
\end{corollary}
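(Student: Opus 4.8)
The plan is to write the cloud's aggregate arrival rate explicitly as a superposition of the $k$ phase-shifted sinusoids and show that its time-varying component is suppressed relative to the mean as $k$ grows. First I would give each edge site a phase offset, $\lambda_{\text{edge}(i)}(t) = \bar\lambda\bigl(1 + A\sin(\gamma t + \phi_i)\bigr)$, and substitute into Equation~\eqref{eq:lambda_cloud} to obtain
\[
\lambda_{\text{cloud}}(t) = k\bar\lambda + \bar\lambda A \sum_{i=1}^{k} \sin(\gamma t + \phi_i).
\]
The mean (constant) part is $k\bar\lambda$, while the entire non-stationarity is carried by $S_k(t) = \sum_{i=1}^{k} \sin(\gamma t + \phi_i)$. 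The whole argument then reduces to bounding $S_k(t)$ against this linear-in-$k$ mean.

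For the deterministic, evenly-spaced interpretation $\phi_i = 2\pi i/k$, I would invoke the closed form for a sum of sines in arithmetic progression,
\[
\sum_{i=1}^{k} \sin(\alpha + i\beta) = \frac{\sin(k\beta/2)}{\sin(\beta/2)}\,\sin\!\left(\alpha + \frac{(k+1)\beta}{2}\right),
\]
applied with $\alpha = \gamma t$ and $\beta = 2\pi/k$. Since $k\beta/2 = \pi$ and $\sin(\pi)=0$, the sum vanishes identically, so $\lambda_{\text{cloud}}(t) = k\bar\lambda$ is exactly constant, i.e.\ perfectly stationary. For the random interpretation, with $\phi_i$ drawn i.i.d.\ uniformly on $[0,2\pi)$, I would instead appeal to the law of large numbers: $\mathbb{E}[\sin(\gamma t + \phi_i)] = 0$, so $S_k(t)/k \to 0$ almost surely, while the central limit theorem gives $S_k(t) = O(\sqrt{k})$ with high probability. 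The effective relative amplitude seen by the cloud is therefore
\[
A_{\text{cloud}} = \frac{\bar\lambda A\,|S_k(t)|}{k\bar\lambda} = \frac{A\,|S_k(t)|}{k} = O\!\left(\frac{1}{\sqrt{k}}\right) \to 0.
\]

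Finally, I would feed this vanishing amplitude into the excess-waiting-time result of Equation~\eqref{excesswait}: since $\Delta w_q \propto A^2$, the cloud's non-stationary delay penalty scales as $A_{\text{cloud}}^2 = O(1/k) \to 0$, so the cloud waiting time converges to its stationary value $\bar\rho/(\mu_{\text{eff}}(1-\bar\rho))$, whereas every edge site retains its full amplitude $A$ and the associated $\Theta(A^2)$ penalty. This establishes both claims at once: the aggregated cloud workload becomes stationary and it enjoys strictly lower waiting times than any individual edge site. The hard part will be pinning down the phrase \emph{uniform diversity in phase shifts} --- the cancellation is exact for evenly-spaced phases but only asymptotic for the i.i.d.\ case, where a concentration bound is needed to rule out rare near-alignments. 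I would therefore state the result primarily for the evenly-spaced (or asymptotically-uniform empirical) phase distribution, where the Dirichlet-kernel cancellation is rigorous, and remark that the random-phase version follows in probability via the same amplitude estimate.
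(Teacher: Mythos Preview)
Your proposal is correct and, for the random-phase case, follows essentially the same line as the paper: both take $\phi_i$ i.i.d.\ uniform on $[0,2\pi)$, use $\mathbb{E}[\sin(\gamma t+\phi_i)]=0$, and invoke a $\sqrt{k}$-type fluctuation bound so that the aggregate relative amplitude $A_{\text{cloud}}\to 0$ as $k\to\infty$. The paper then closes by citing Ross's conjecture, while you instead feed $A_{\text{cloud}}$ into Equation~\eqref{excesswait} to obtain the explicit $O(1/k)$ decay of the excess wait; either ending is fine and yours is arguably tighter. Your deterministic evenly-spaced interpretation via the Dirichlet-kernel identity is genuinely additional content not present in the paper and buys you \emph{exact} cancellation for every finite $k\ge 2$, which is a cleaner statement than the asymptotic one. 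Overall your argument is more carefully quantified than the paper's, which states the variance scaling without justification; your concern about pinning down ``uniform diversity'' is well placed, since the paper commits only to the i.i.d.\ uniform reading.
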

\begin{proof}
 If edge sites have a diversity in phase shifts $\phi_i$
such that,
\[ \lambda_{edge(i)}(t) = \bar\lambda_{i}(1 + A_{i} \sin(\gamma_i t + \phi_i))),\] 


If, $\phi_i$ are phases drawn from a probability distribution $f(\phi)$, If the phases \( \phi_i \) are independently and identically distributed (i.i.d.) with a uniform distribution over \([0,2\pi]\),
\[
f(\phi) = \frac{1}{2\pi}, \quad \phi \in [0,2\pi],
\]
The sinusoidal terms \( \sin(\gamma t + \phi_i) \) are symmetrically distributed around zero, leading to cancellations in the aggregate sum. 
The expected value of \( \sin(\gamma t + \phi_i) \) is zero:
\[
E[\sin(\gamma t + \phi_i)] = 0,
\]
The variance of the sum scales as \( 1/\sqrt{k} \).
As $k \rightarrow \infty$, $ A_{cloud} \rightarrow 0 $,  making the system more and more stationary and thus by Ross's conjecture reducing the queuing delays.   
\end{proof}
\section{Deriving Fluid Approximation for Net-Input Process with Sinusoidal Arrival Rate} \label{appb}

The arrival rate is given by the sinusoidal form:
\[
\lambda(t) = \bar{\lambda} \left(1 + A \sin(\gamma t)\right),
\]
where \( \bar{\lambda} \) is the average arrival rate, \( A \in [0, 1] \) is the amplitude, and \( \gamma = \frac{2\pi}{T} \) is the frequency over a cycle of length \( T \).

The effective service rate accounting for both service phases is:
\[
\mu_{\text{eff}} = \left( \frac{1}{\mu_1} + \frac{r}{\mu_2} \right)^{-1} = \frac{\mu_1 \mu_2}{\mu_2 + r\mu_1}.
\]

We are interested in estimating the fluid backlog \( Q(t_1, t_2) \) accumulated over an interval \( [t_1, t_2] \), where \( \lambda(t) > \mu_{\text{eff}} \). The net-input during this interval is:
\begin{equation}
V(t_1, t_2) = \int_{t_1}^{t_2}\left[\lambda(t) - \mu_{\text{eff}} \right] \, dt. \label{eq:net_input}
\end{equation}

Since queue lengths cannot be negative, we define the backlog as the \emph{positive part} of the net input:
\[
Q(t_1, t_2) = [V(t_1, t_2)]_+, \qquad \text{where } [x]_+ = \max(x, 0).
\]

\subsection*{Evaluating the Net Input}

Substituting the sinusoidal form of \( \lambda(t) \) into the integrand:
\[
V(t_1, t_2) = \int_{t_1}^{t_2} \left[ \bar{\lambda} - \mu_{\text{eff}} + \bar{\lambda} A \sin(\gamma t) \right] dt.
\]

\[
 \hspace{65pt}  = (\bar{\lambda} - \mu_{\text{eff}})(t_2 - t_1) + \bar{\lambda} A \int_{t_1}^{t_2} \sin(\gamma t) \, dt.
\]

Using the identity \( \int \sin(ax) \, dx = -\frac{1}{a} \cos(ax) + C \), we evaluate:
\[
\int_{t_1}^{t_2} \sin(\gamma t) \, dt = -\frac{1}{\gamma} \left[ \cos(\gamma t_2) - \cos(\gamma t_1) \right].
\]

Substitute into the expression for \( V(t_1, t_2) \):
\begin{equation}
V(t_1, t_2) = (\bar{\lambda} - \mu_{\text{eff}})(t_2 - t_1) + \frac{\bar{\lambda} A}{\gamma} \left[ \cos(\gamma t_1) - \cos(\gamma t_2) \right]. \label{eq_vt}    
\end{equation}

Therefore, the fluid backlog is:
\[
Q(t_1, t_2) = \left[
(\bar{\lambda} - \mu_{\text{eff}})(t_2 - t_1) + \frac{\bar{\lambda} A}{\gamma} \left( \cos(\gamma t_1) - \cos(\gamma t_2) \right)
\right]_+.
\]

Finally, the fluid approximation for the average waiting time over this interval is:
\[
W_f(t_1, t_2) = \frac{Q(t_1, t_2)}{\mu_{\text{eff}}}.
\]

\subsection*{Identifying the Overload Interval \texorpdfstring{\( [t_1, t_2] \)}{[t1, t2]} for Sinusoidal Arrivals}

The time interval during which the arrival rate \( \lambda(t) \) exceeds the effective service rate \( \mu_{\text{eff}} \) is denoted by \( [t_1, t_2] \). In this overload interval, the system accumulates backlog because it cannot serve jobs as quickly as they arrive. The endpoints \( t_1 \) and \( t_2 \) are defined by the intersection of \( \lambda(t) \) with \( \mu_{\text{eff}} \).

\[
\lambda(t) = \bar{\lambda} \left(1 + A \sin(\gamma t)\right) = \mu_{\text{eff}},
\]

we solve for \( t \):

\[
1 + A \sin(\gamma t) = \frac{\mu_{\text{eff}}}{\bar{\lambda}} \]
\[
\sin(\gamma t) = \frac{\mu_{\text{eff}}}{\bar{\lambda}} - 1.
\]

\[
\gamma t = \arcsin( \frac{\mu_{\text{eff}}}{\bar{\lambda}} - 1).
\]

Let:

\[
\theta := \arcsin\left(\frac{\mu_{\text{eff}}}{\bar{\lambda}} - 1\right).
\]

Then the two crossing points are:
\[
t_1 = \frac{\theta}{\gamma}, \qquad
t_2 = \frac{\pi - \theta}{\gamma}.
\]

\subsection*{Substituting \( t_1 \) and \( t_2 \) into \( V(t_1, t_2) \)}





Compute \( t_2 - t_1 \):
\begin{equation} 
t_2 - t_1 = \frac{\pi - \theta}{\gamma} - \frac{\theta}{\gamma} = \frac{\pi - 2\theta}{\gamma}   \label{eqt2t1}
\end{equation}
Compute \( \cos(\gamma t_1) - \cos(\gamma t_2) \)

We substitute:
\[
\gamma t_1 = \theta, \quad \gamma t_2 = \pi - \theta
\]

Using the identity \( \cos(\pi - \theta) = -\cos(\theta) \), we get:
\begin{equation}
\cos(\gamma t_1) - \cos(\gamma t_2) = \cos(\theta) - (-\cos(\theta)) = 2\cos(\theta) \label{cosgamma}
\end{equation}

Plug \eqref{eqt2t1} and \eqref{cosgamma}  into the backlog expression \eqref{eq_vt}:
\begin{equation}
V(t_1,t_2) = \frac{1}{\gamma} \left( (\bar{\lambda} - \mu_{\text{eff}})(\pi - 2\theta)
+ 2\bar{\lambda} A \cos(\theta) \right),   
\quad \text{where } \theta = \arcsin\left( \frac{\mu_{\text{eff}}}{\bar{\lambda}} - 1 \right)
\end{equation}

\end{document}